\numberwithin{equation}{section}  %
\theoremstyle{definition}
\newtheorem{example}{Example}[section]
\newtheorem{definition}{Definition}[section]
\theoremstyle{plain}   %
\newtheorem{theorem}[definition]{Theorem}  %
\newtheorem{proposition}[definition]{Proposition}
\newtheorem{lemma}[definition]{Lemma}
\pgfplotsset{compat=1.16}
\newcommand{\complexf}[1]{\mathcal{#1}}   %
\newcommand{\cplC}{\complexf{C}}   %
\newcommand{\cplD}{\complexf{D}}   %
\newcommand{\cplI}{\complexf{I}}   %
\newcommand{\cplO}{\complexf{O}}   %
\newcommand{\cplP}{\complexf{P}} %
\newcommand{\coloring}{\chi}
\newcommand{\PEMCat}{\mathcal{PM}}%
\newcommand{\SimpMCat}{\mathcal{SM}} %
\definecolor{nodewhite}{RGB}{255,255,255}
\definecolor{nodeblue}{RGB}{0,0,204}
\definecolor{nodered}{RGB}{255,204,204}
\newcommand{\nodeW}{\ooalign{\hss{\color{nodewhite}$\bullet$}\hss \crcr $\circ$}}
\newcommand{\nodeB}{\ooalign{\hss{\color{nodeblue}$\bullet$}\hss \crcr $\circ$}}
\newcommand{\nodeR}{\ooalign{\hss{\color{nodered}$\bullet$}\hss \crcr $\circ$}}
\newcommand{\proj}{\pi}  %
\newcommand{\zto}{\to}
\newcommand{\zthen}{\Rightarrow}
\newcommand{\RefClosure}[1]{{#1}^*}
\newcommand{\func}[2]{#1\mathrel{\to}#2}
\newcommand{\anglpair}[1]{\langle{#1}\rangle}
\newcommand{\abs}[1]{{\lvert}#1{\rvert}}
\newcommand{\PowerSet}[1]{2^{#1}}
\newcommand{\ident}[1]{\mathrm{id}_{#1}}
\newcommand{\clos}[1]{{#1}^{\ast}}
\newcommand{\Ag}{\mathsf{Ag}} %
\newcommand{\AtomProps}{\mathsf{AP}}  %
\newcommand{\At}{\mathsf{At}}
\newcommand{\Value}{\mathsf{Value}}
\newcommand{\dec}{\delta} %
\newcommand{\dectop}{\dec_{\mathrm{top}}} %
\newcommand{\deckrip}{\dec_{\mathrm{krip}}} %
\newcommand{\smplMf}[1]{\mathcal{#1}} %
\newcommand{\labSM}{\ell}   %
\newcommand{\precond}{\mathsf{pre}}                  %
\newcommand{\AliveSet}[1]{\overline{#1}}  %
\newcommand{\actMf}[1]{\mathbf{#1}}  %
\newcommand{\actionkappa}{\overline{\kappa}} %
\newcommand{\MPp}[1]{\ActionOne}  %
\newcommand{\MPpA}[1]{T_{#1}} %
\newcommand{\MPprel}[1]{\sim^{T_{#1}}}   %
\newcommand{\MP}[1]{\mathbf{MP}}
\newcommand{\MPA}[1]{T^{\MP{#1}}}
\newcommand{\MPrel}[1]{\sim^{\MP{#1}}}
\newcommand{\MPprecond}[1]{\precond^{\MP{#1}}}
\newcommand{\Inpu}{\mathcal{I}}
\newcommand{\Complex}{\mathcal{C}}
\newcommand{\Mepistemic}{\mathcal{M}}
\newcommand{\Nepistemic}{\mathcal{N}}
\newcommand{\ActionOne}{\mathbf{A}_{0}}
\newcommand{\Action}{\mathbf{A}}
\newcommand{\ActOneSMP}{\mathbf{MP}_{0}}
\newcommand{\ActSMP}{\mathbf{MP}}
\newcommand{\Lo}{\operatorname{lo}}
\newcommand{\lo}{\mathrm{lo}}
\newcommand{\DeltaTop}{\delta_{\mathrm{top}}}
\newcommand{\DeltaKrip}{\delta_{\mathrm{krip}}}
\newcommand{\Facet}{\mathcal{F}}
\newcommand{\Prod}[2]{{#1}\left[{#2}\right]}
\newcommand{\ImProd}[2]{{#1}\left\{{#2}\right\}}
\newcommand{\Abs}[1]{\lvert #1 \rvert}
\newcommand{\Tuple}[1]{\langle #1 \rangle}
\newcommand{\KripkeTuple}[1]{\langle W^{#1}, \sim^{#1}, L^{#1} \rangle}
\newcommand{\ActionTuple}[1]{\langle W^{#1}, \sim^{#1}, \PreOp \rangle}
\newcommand{\SimpTuple}[1]{\langle V^{#1}, S^{#1}, \chi^{#1}, \labSM^{#1} \rangle}
\newcommand{\PreEquiv}[2]{\left[{#2}\right]_{#1}^{\PreOp}}
\newcommand{\SendFailOp}[1]{<_{#1}}
\newcommand{\SendFail}[3]{{#2} \SendFailOp{#1} {#3}}
\newcommand{\NotSendFail}[3]{{#2} \not\SendFailOp{#1} {#3}}
\newcommand{\SameLocalViewForAlivesOp}[1]{\approx_{#1}}
\newcommand{\SameLocalViewForAlives}[3]{{#2} \SameLocalViewForAlivesOp{#1} {#3}}
\newcommand{\ActEquiv}[2]{\left[{#2}\right]_{\SameLocalViewForAlivesOp{#1}}}
\newcommand{\relK}[2]{\sim^{#1}_{#2}}
\newcommand{\clsMark}{\ast}    %
\newcommand{\Rcls}[1]{#1^{\clsMark}}    %
\newcommand{\eqClass}[2]{[#1]_{#2}}  %
\newcommand{\preEqClass}[2]{[#2]_{#1}^{\mathrm{pre}}} %
\newcommand{\satop}{\mathsf{sat}}
\newcommand{\sat}[2]{\satop_{#1} (#2)}
\newcommand{\bigsat}[2]{\satop_{#1} \bigl(#2\bigr)}
\newcommand{\PreOp}{\mathsf{pre}}
\newcommand{\Pre}[1]{\operatorname{\PreOp}{#1}}
\newcommand{\Alives}[1]{\overline{#1}}
\newcommand{\Pfont}[1]{\mathsf{#1}}
\newcommand{\Pfalse}{\Pfont{false}}
\newcommand{\Pinput}[2]{\Pfont{input}_{#1}^{#2}}
\newcommand{\Palive}{\Pfont{alive}}
\newcommand{\Paliveop}[1]{\operatorname{\Palive}({#1})}
\newcommand{\Modalfont}[1]{\mathrm{#1}}
\newcommand{\ModK}[1]{\mathop{\Modalfont{K}_{#1}}} %
\newcommand{\ModLang}{\mathcal{L}}                   %
\newcommand{\ModLangK}{\ModLang_{\Modalfont{K}}} %
\newcommand{\ModLangKPlus}{\ModLang_{\Modalfont{K}, \Palive}^{+}} %
\newcommand{\Axiom}[1]{\mathbf{\mathrm{#1}}}
\newcommand{\AxiomSn}{\Axiom{S5_n}}
\newcommand{\AxiomKBn}{\Axiom{KB4_n}}
\newcommand{\keywd}[1]{\emph{#1}}       %
\title{Partial Product Updates for Agents of Detectable Failure
    and Logical Obstruction to Task Solvability}
\author{
    ~~\hfil%
    \parbox[t]{.25\textwidth}{\centering Daisuke Nakai\hfil\\[1.5ex]}
    ~ \hfil ~
    \parbox[t]{.25\textwidth}{\centering Masaki Muramatsu\\[1.5ex]}%
    ~ \hfil ~
    \parbox[t]{.28\textwidth}{\centering Susumu Nishimura\thanks{Corresponding author: \texttt{susumu@math.kyoto-u.ac.jp}}\\[1.5ex]}%
    \hfil\hfil~~
    \\ ~\\
    Dept.\ Math., Graduate School of Science, Kyoto University
    }
\begin{document}

\maketitle

\begin{abstract}
    The logical method proposed by Goubault, Ledent, and Rajsbaum
    provides a novel way to show  the unsolvability of distributed tasks
    by means of a logical obstruction, which is  an epistemic logic formula
    describing the reason of unsolvability.
    In this paper, we introduce the notion of partial product update, which refines
    that of product update in the original logical method, 
    to encompass distributed tasks and protocols modeled
    by impure simplicial complexes. 
    With this extended notion of partial product update, the original
    logical method  is generalized so that it allows the application of
    logical obstruction to show unsolvability results in a distributed environment
    where the failure of agents is detectable.
    We demonstrate the use of the logical method 
    by giving a concrete logical obstruction and showing that the consensus
    task is unsolvable by the single-round synchronous message-passing protocol.
\end{abstract}

\section{Introduction}
\label{sec:intro}

The solvability of distributed tasks is a fundamental problem in the theory of 
distributed computing, and 
several proof methods for showing unsolvability have been developed.  
The most classical is based on the valency argument \cite{FischerLynchPaterson85,Herlihy91}: 
If we assume the solvability of a task, 
it implies a contradictory set of outputs to be produced along concurrent execution paths.
Another method is the one that uses the topological model \cite{Book:2013:HerlihyKozlovRajsbaum}:
In the topological method, the computation of a distributed task or a
protocol is modeled by a function over simplicial complexes representing
the nondeterministic sets of states of concurrently running agents, and
the unsolvability of a task is demonstrated by a topological
inconsistency that is implied by a hypothetical existence of a
solution to the task.

In addition to these precursors,
a new method, called the \keywd{logical method}, has recently been proposed by 
Goubault, Ledent, and Rajsbaum \cite{Inf:2021:GoubaultLedentRajsbaum}. 
They proposed to discuss the structure of distributed computation 
in a Kripke model of epistemic knowledge, which is derived from the topological model
of simplicial complex. They formulated the distributed task and protocol using the so-called 
product update models, which originate from the study of dynamic epistemic logic \cite{DitmarschHoekKooi:DELbook08}.
The logical method provides a novel way to show the unsolvability of distributed tasks:
The unsolvability follows from a \keywd{logical obstruction}, i.e., 
a formula that describes the reason for the unsolvability in the formal
language of epistemic logic. 

Their logical method, however, only applies
to the tasks and protocols where the possible failure of distributed agents is insignificant
for the discussion of unsolvability.
The semantics of epistemic logic is given by the so-called \keywd{epsistemic models}, 
where an epistemic model is a Kripke model whose possible worlds are structured 
by equivalence relations over them.
They also showed that the epistemic models used in the logical method
and the pure simplicial complexes used in the topological method are isomorphic models. 
The purity means that each facet is of the same dimension, that is, 
each possible global state of a distributed system consists of 
the same number of live agents. 

This implies that the epistemic models in \cite{Inf:2021:GoubaultLedentRajsbaum} 
are unaware of failures, since we need impure simplicial complexes to model 
`dead' agents that are missing from a facet.
Despite the innocence of failure, the epistemic models can 
argue certain significant unsolvability results, including those for
the consensus task and $k$-set agreement tasks by a wait-free protocol 
in an asynchronous environment, 
where a `dead' agent can be regarded as just infinitely slow in execution.
(See Section~1.1.3 of \cite{PhD:2019:Ledent} for further discussion.)

Later in \cite{STACS22:GoubaultLedentRajsbaum}, the same authors devised 
\keywd{partial epistemic models}, whose possible worlds are structured 
by partial equivalence relations (PERs), 
which are subordinate equivalence relations that are not necessarily reflexive,
and argued that partial epistemic models are appropriate for 
modeling the possible failure of agents.
Partial epistemic models inherit several virtues from epistemic models.
They are isomorphic to the topological model of impure simplicial complexes;
They provide the semantics for the epistemic logic with axiom system~$\AxiomKBn$, 
as so do epistemic models for axiom system~$\AxiomSn$ \cite{Book:2004:JosephMoses,DitmarschHoekKooi:DELbook08};
Furthermore, they enjoy the knowledge gain property, 
which states that the knowledge expressed by an adequate class of formulas 
never increases along a morphism over the models.

However, the framework of logical method has not been presented for partial epistemic models in~\cite{STACS22:GoubaultLedentRajsbaum}. 
The notion of product update relevant to partial epistemic models is needed for the definition of task solvability, but they did not present it.

In this paper, we introduce the notion of \keywd{partial product update}, 
which refines the original product update \cite{Inf:2021:GoubaultLedentRajsbaum}.
Using partial product update models, we give a logical definition of 
task solvability and thereby provide the logical method for proving task unsolvability.
Furthermore we present a concrete logical obstruction 
and show that the consensus task is not solvable by 
the synchronous message passing protocol \cite{S1571:2001:HerlihyRajsbaumTuttle}. 

The partial product update model refines the original model to 
allow coherent definitions of distributed tasks and protocols
in a distributed environment in which the exact set of dead agents is detectable. 
In a product update model $\Prod{\cplI}{\actMf{A}}$ of \cite{Inf:2021:GoubaultLedentRajsbaum}, 
a task or a protocol is specified by a set of products $(X,t)$ where an action $t$ in 
$\actMf{A}$ represents a possible output for the input $X$ in the input model~$\cplI$. 
In contrast, in partial epistemic models, an action $t$ 
should be associated with not necessarily a single input 
but with a set of inputs that are indistinguishable by the agents that are alive in $t$. 
For this, we define a partial product update model $\ImProd{\cplI}{\actMf{A}}$
as a set of products of the form $(\eqClass{X}{t},t)$, where $\eqClass{X}{t}$ is
an appropriate equivalence class of $X$ determined with respect to 
the set of agents that are alive in $t$. 

Using partial product update models, we define task solvability by 
the existence of a morphism that mediates the partial product update model of a protocol and that of a task.
We will show that this definition is 
equivalent to the topological definition of task solvability
in the following sense: 
Given a task and a protocol as functions over simplicial complexes,
we derive an action model and a partial product update model for each of them. 
Then a simplicial map that defines the solvability in the topological model 
exists if and only if a mediating morphism 
over partial epistemic models exists. 
This definition of task solvability using partial product updates 
refines the one using product updates so that it allows a detectable set of failed agents.
Furthermore, likewise in \cite{Inf:2021:GoubaultLedentRajsbaum}, 
it provides the logical method that allows a logical obstruction 
to show the unsolvability of a task.

We demonstrate the use of logical obstruction in partial product update models
by showing that the consensus task is unsolvable by the single-round
synchronous message passing protocol. 
In the synchronous message passing protocol \cite{S1571:2001:HerlihyRajsbaumTuttle},
each agent sends copies of its local value to other agents synchronously. 
Unlike wait-free protocols in an asynchronous environment, 
a crash of an agent in a synchronous environment is detectable by other live agents, 
since the failure of message delivery from a dead agent can be detected within a bounded period of time. 
This gives rise to an impure simplicial complex model, as depicted in Figure~\ref{fig:impureSMP3}.
In order to argue unsolvability for such an impure simplicial complex in
partial epistemic model, we construct a partial product update from an action model 
whose actions are represented by posets of rank at most $1$. 
We present a concrete epistemic logic formula that serves as a logical 
obstruction that refutes the solvability of the consensus task.

\begin{figure}[t]
    \centering
    \includegraphics[scale=0.5]{./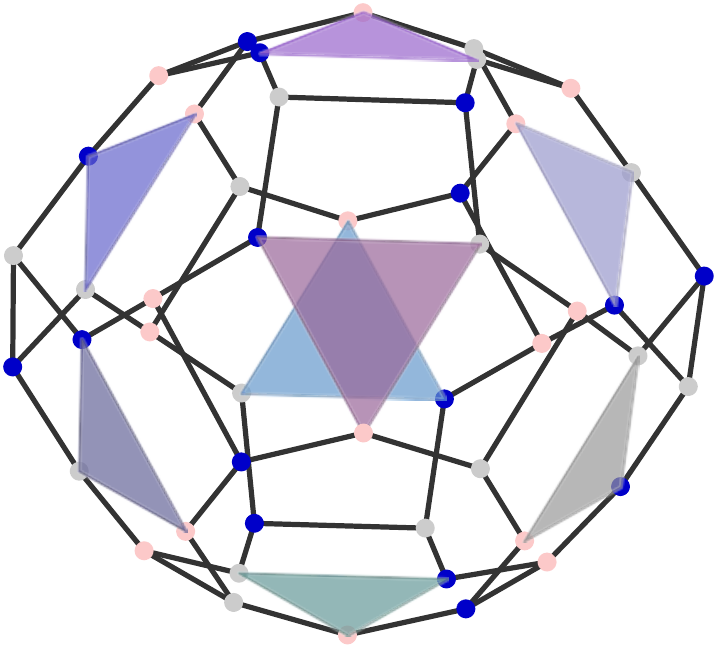}
    \caption{An impure simplicial complex of the synchronous message passing protocol for~$3$ agents,
    with $0$-dimensional facets being omitted}
    \label{fig:impureSMP3}
\end{figure}

We note that 
the previous work \cite{S1571:2001:HerlihyRajsbaumTuttle}
analyzes more precisely the (un)solvability of $k$-set agreement tasks
by the $r$-round synchronous message passing protocol, for varying $k$ and $r$,  
using the topological method.
In the present paper, we only demonstrate the unsolvability of 
the consensus task (i.e., the $1$-set agreement task) by the single-round 
synchronous message passing protocol. 
To show the other unsolvability results, we would need to further refine
the logical method presented in this paper,  as it has been done for the
logical method in asynchronous environments,   e.g.,
\cite{Nishida:Msc20,YagiNishimura:arXiv20}.

The rest of the paper is organized into the following sections.
Section~\ref{sec:partialEpiModel} reviews the two previous models of distributed computing, 
namely, the topological models and the epistemic models.
In Section~\ref{sec:PPU}, we introduce the notion of partial product update and
give the definition of task solvability in partial epistemic models.
We also show that this definition of task solvability is 
equivalent to the standard one given in terms of the topological method.
(The formal proof of the equivalence is given in Appendix~\ref{sec:solvablEq}.)
Section~\ref{sec:SMPactionmodel} defines the partial product update model for the 
synchronous message passing protocol, where the action model consists of actions 
that are represented by posets of rank at most $1$. 
Section~\ref{sec:obstructionSMP} gives a concrete logical obstruction 
and proves that the consensus task is not solvable by the single-round 
synchronous message passing protocol.
Finally Section~\ref{sec:conclusion} concludes the paper 
and indicates the direction of future research.

\section{Partial Epistemic Model for Epistemic Logic}
\label{sec:partialEpiModel}

Throughout this paper, we assume a distributed system consisting of $n$ ($n>1$) agents, 
which are distinguished by the identifiers taken from the set $\Ag = \{0, \ldots, n-1 \}$.
We say `agent~$a$' to refer to the agent with id $a \in \Ag$. 

\subsection{Topological model of distributed computing}
\label{subsec:topModelDistComp}

In the topological method \cite{Book:2013:HerlihyKozlovRajsbaum}, 
distributed systems are modeled by simplicial complexes. 
A global state of a distributed system is modeled 
by a \keywd{simplex}, i.e., a nonempty set of vertexes, where each vertex
represents the local state of a particular agent.
The nonderteministic set of global states of a distributed system is 
modeled by a \keywd{simplicial complex}, a set of simplexes that is closed 
under set inclusion. 
The topological model uses the so called \emph{chromatic}
simplicial complex, whose vertexes are properly `colored' with agent ids. 

\begin{definition}[chromatic simplicial complex]
    A \keywd{chromatic simplicial complex} $\cplC$ is a triple $\Tuple{V, S, \coloring}$ consisting
    of a set $V$ of vertexes, a set $S$ of \keywd{simplexes}, 
    and a \keywd{coloring map} $\coloring: V \zto \Ag$ that satisfy:
    \begin{itemize}
        \item $S$ is the set of simplexes, i.e., a set of nonempty subsets of $V$ 
        such that $X\in S$ and $\emptyset\subsetneq Y\subseteq X$ implies $Y \in S$;
        \item For every $X \in S$ and $u,v\in X$, $\coloring(u)=\coloring(v)$ implies $u=v$.
    \end{itemize}

\end{definition}

For brevity, we often write complexes (resp., simplexes) to mean chromatic simplicial complexes (resp., chromatic simplexes).

A simplex $X$ is of dimension $d$, if $\Abs{X} =d+1$.
The dimension of a complex $\cplC$ is the maximum dimension of the simplexes contained in $\cplC$. 
A simplex $X$ is called a \keywd{facet} of $\cplC$, if 
$X$ is a maximal simplex, i.e., $\cplC$ contains no facet that is properly larger than $X$. 
We write $\Facet(\cplC)$ for the set of facets of $\cplC$.
A complex is called \keywd{pure} (of dimension $n-1$) if all facets are of the same dimension~$n-1$;
otherwise, the complex is called \keywd{impure}.
An impure complex contains a facet of dimension less than $n-1$. 
Such a facet represents a global state of the distributed system 
where some agents are `dead' due to crash.

Given complexes $\Complex = \Tuple{V, S, \coloring}$ and $\Complex' = \Tuple{V', S', \coloring'}$,
a \keywd{simplicial map} $\delta:\cplC\to\cplC'$ is a color-preserving
map from $V$ to $V'$ such that:
\begin{itemize}
    \item $f(X) \in S'$ for every $X\in S$;
    \item $\coloring'(f(v)) = \coloring(v)$ for every $v \in V$.
\end{itemize}

\subsection{Partial epistemic model semantics for epistemic logic}
\label{subsec:partialEpistemicModel}

We are concerned with the analysis of distributed computability 
with epistemic logic \cite{Book:2004:JosephMoses}, a propositional logic augmented with a knowledge modality. 

\begin{definition}[The syntax of epistemic logic] \label{def:SyntaxEpistemicLogic}
    We assume the set $\At$ is a disjoint union of 
    atomic propositions indexed by agents, i.e., $\At =\bigcup_{a\in\Ag} \At_a$. 
    For $A \subseteq \Ag$, we write $\At_A$ for the set
    $\bigcup_{a \in A} \At_a$ of atomic propositions concerning the subset $A$ of agents.

    The set $\ModLangK$ of epistemic logic formulas are defined by the following BNF grammar:
    \[
        \varphi ::=
        p
        \mid \neg \varphi
        \mid \varphi \wedge \varphi \mid \varphi \vee \varphi
        \mid \ModK{a}{\varphi}
        \qquad (p \in \At,\: a \in \Ag).
    \]

    As usual, the implication $\varphi \zthen \varphi'$ is logically equivalent to $\neg \varphi \vee \varphi'$
    and $\Pfalse$ to $p\wedge \neg p$, for some $p\in\At$.  
    For a finite set of formulas $G = \{\varphi_1, \cdots, \varphi_k\}$, we write 
    $\bigwedge G$ for $\bigwedge_{i=1}^k \varphi_i$ 
    and $\bigvee G$ for $\bigvee_{i=1}^k \varphi_i$.

    For $a\in \Ag$ and $B\subseteq \Ag$, 
    we write $\Paliveop{a}$ to abbreviate the formula $\neg \ModK{a} \Pfalse$ 
    and also $\Paliveop{B}$ to abbreviate $\bigwedge_{a \in B} \Paliveop{a}$. 
    The formula $\Paliveop{a}$ (resp., $\Paliveop{B}$) is intended to 
    describe that agent $a$ (resp., all agents in $B$) are alive. 

    In the subsequent discussion, we are concerned with a particular subclass $\ModLangKPlus$ 
    of formulas, called \keywd{guarded positive epistemic formulas}, given by:
    \[
        \varphi ::=
        \Paliveop{B} \zthen \psi
        \mid \varphi \wedge \varphi
        \mid \varphi \vee \varphi
        \mid \ModK{a}{\varphi}
        \qquad (a \in \Ag,~ B \subseteq \Ag),
    \]
    where $\psi$ ranges over pure propositional formulas whose atomic formulas 
    are restricted to those concerning agents in $B$, namely, 
    \[
        \psi ::=
        p
        \mid \neg \psi
        \mid \psi \wedge \psi
        \mid \psi \vee \psi
        \qquad (p \in \At_B).
    \]
\end{definition}

Following \cite{STACS22:GoubaultLedentRajsbaum}, 
below we define the semantics of epistemic logic in a partial epistemic model.

\begin{definition}[partial epistemic model] \label{def:PEframe}\label{def:PEmodel}
    A \keywd{partial epistemic frame} $\Tuple{W, \sim}$ is a pair consisting of:
    \begin{itemize}
        \item A nonempty finite set $W$ of possible worlds;
        \item An \keywd{indistinguishability relation} $\sim$, 
        which is a family of binary relations $\{ {\sim_a} \}_{a\in\Ag}$ over $W$  
        where each $\sim_a$ is a \keywd{partial equivalence relation} (PER), i.e., 
        $\sim_a$ is a symmetric and transitive, but not necessarily reflexive relation.
    \end{itemize}

    A \keywd{partial epistemic model} $\Mepistemic=\Tuple{W, \sim, L}$ 
    is a triple 
    that augments a partial epistemic frame $\Tuple{W, \sim}$ with a function $L: W \to \PowerSet{\At}$.
    The function $L$ assigns a set $L(w)$ of true atomic propositions to each world $w \in W$.
\end{definition}

When $w \sim_a w'$ holds in a partial epistemic model,     
it means that agent $a$ is alive in
both of the possible worlds $w$ and $w'$ and it cannot distinguish between these worlds.
Particularly, $w \not\sim_a w$ implies that
an agent~$a$ is dead in a possible world $w$.
We write $\Alives{w}$ for the set of agents that is alive in $w$, 
i.e., $\Alives{w} = \{ a \in \Ag \mid w \sim_a w \}$. 
For $A\subseteq\Ag$, we also write $w \sim_A w'$ to mean
$w \sim_a w'$ for every $a\in A$. 

Partial epistemic models generalize the usual epistemic models: 
Indistinguishability relations are given by equivalence relations instead of PERs,
which means that every agent is alive in epistemic models.

\begin{definition}
    Let $\Mepistemic = \Tuple{W, \sim, L }$ be
    a partial epistemic model.
    Given $w \in \Mepistemic$
    and $\varphi \in \ModLangK$,
    the satisfaction relation $\Mepistemic, w \models \varphi$,  
    which reads $\varphi$ is true in the possible world $w$ of $\Mepistemic$, 
    is defined by induction on the structure of $\varphi$ as follows.
    \begin{align*}
        \Mepistemic, w \models p
         &\quad \text{ iff } \quad p \in L(W)       \\
        \Mepistemic, w \models \neg \varphi
         &\quad \text{ iff } \quad \Mepistemic, w \not\models \varphi                                      \\
         \Mepistemic, w \models \varphi \wedge \varphi'
         &\quad \text{ iff } \quad \Mepistemic, w \models \varphi \text{ and } \Mepistemic, w \models \varphi' \\
         \Mepistemic, w \models \varphi \vee \varphi'
         &\quad \text{ iff } \quad \Mepistemic, w \models \varphi \text{ or } \Mepistemic, w \models \varphi' \\
        \Mepistemic, w \models \ModK{a}{\varphi}
         &\quad \text{ iff } \quad \Mepistemic, w' \models \varphi
        \text{ for every $w' \in W$ satisfying $w \sim_a w'$}.
    \end{align*}

    We write $\Mepistemic \models \varphi$ to mean $\varphi$ is \keywd{valid} in $\Mepistemic$, 
    i.e., $\Mepistemic, w \models \varphi$ for every $w\in W$. 
\end{definition}

Partial epistemic models form a category 
whose objects are partial epistemic models and morphisms are the functions defined as follows.
A \keywd{morphism} $f$ from $\Mepistemic = \Tuple{W, \sim, L}$ to $\Mepistemic' = \Tuple{W', \sim', L'}$
is a function $f: W \to \PowerSet{W'}$ satisfying the following properties:
\begin{itemize}
    \item (Preservation of $\sim$) For all $a\in\Ag$ and $w, w' \in M$, 
    $w \sim_a w'$ implies $u\sim_a' u'$ for every $u\in f(w)$ and $u'\in f(w')$. 
    \item (Saturation) For all $w\in W$, there exists $w' \in f(w)$ such that 
    $f(w) = \sat{\Alives{w}}{w'}$, where $\sat{U}{v}=
    \{ w \in W \mid  v \sim_U w \}$ 
    is a \keywd{saturation}, the set of all possible worlds that cannot be 
    distinguished from $w$ by any agent $a\in U$.

    \item (Preservation of atomic formulas)
    For all $w \in M$ and $w'\in f(w)$, $L(w) \cap \At_{\Alives{w}}= L'(w)\cap \At_{\Alives{w}}$.
\end{itemize}

In order to show unsolvability results in this paper, 
we will use the \keywd{knowledge gain} property, 
which states that the amount of knowledge is never increased along a morphism 
from one Kripke model to another. 
For partial epistemic models, the knowledge gain property holds 
for any guarded positive formula \cite{STACS22:GoubaultLedentRajsbaum}.

\begin{proposition}[knowledge gain]
    \label{prop:knowledgeGainLk}
    Let $\Mepistemic = \KripkeTuple{\Mepistemic}$
    and $\Nepistemic = \KripkeTuple{\Nepistemic}$ be partial epistemic models.
    Let $f: \Mepistemic \to \Nepistemic$ be
    a morphism such that $w' \in f(w)$
    for all $w \in \Mepistemic$ and $w' \in \Nepistemic$.
    Then
    $\Nepistemic, w' \models \varphi$ implies $\Mepistemic, w \models \varphi$
    for any guarded positive epistemic formula $\varphi \in \ModLangKPlus$.
\end{proposition}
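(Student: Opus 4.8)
The plan is to prove the statement by structural induction on the guarded positive formula $\varphi \in \ModLangKPlus$. To make the induction close at the knowledge modality, I would not argue about a single fixed pair but instead establish, simultaneously for all pairs, the following strengthened claim: for every $w \in \Mepistemic$ and every $w' \in f(w)$, if $\Nepistemic, w' \models \varphi$ then $\Mepistemic, w \models \varphi$. Each of the three defining properties of a morphism — preservation of $\sim$, saturation, and preservation of atomic formulas — is used at a specific point: preservation of $\sim$ in both the base and modal steps, saturation in the modal step, and preservation of atomic formulas in the base step.

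For the base case $\varphi = (\Paliveop{B} \zthen \psi)$, where $\psi$ is a propositional formula with atoms in $\At_B$, I would assume $\Nepistemic, w' \models \Paliveop{B} \zthen \psi$ together with $\Mepistemic, w \models \Paliveop{B}$ (otherwise the implication holds vacuously at $w$) and derive $\Mepistemic, w \models \psi$. The second assumption means $B \subseteq \Alives{w}$. First I transfer aliveness to $w'$: for each $a \in B$ we have $w \sim_a w$, so preservation of $\sim$ applied with $u = u' = w'$ yields $w' \sim_a' w'$; hence $B \subseteq \Alives{w'}$, so $\Nepistemic, w' \models \Paliveop{B}$ and therefore $\Nepistemic, w' \models \psi$. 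Second, since $B \subseteq \Alives{w}$ gives $\At_B \subseteq \At_{\Alives{w}}$, preservation of atomic formulas yields $L(w) \cap \At_B = L'(w') \cap \At_B$, so $w$ and $w'$ agree on every atom occurring in $\psi$; a routine sub-induction on the propositional structure of $\psi$ then carries $\Nepistemic, w' \models \psi$ back to $\Mepistemic, w \models \psi$.

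The conjunction and disjunction cases follow at once from the induction hypothesis applied to the same pair $(w, w')$. For the modal case $\varphi = \ModK{a}{\varphi'}$, I would assume $\Nepistemic, w' \models \ModK{a}{\varphi'}$ and check $\Mepistemic, v \models \varphi'$ for an arbitrary $v$ with $w \sim_a v$. Saturation makes $f(v)$ nonempty, so I may choose some $v' \in f(v)$; preservation of $\sim$ applied to $w \sim_a v$ with $w' \in f(w)$ and $v' \in f(v)$ gives $w' \sim_a' v'$, whence $\Nepistemic, v' \models \varphi'$ by the semantics of $\ModK{a}$. Applying the induction hypothesis to the pair $(v, v')$ yields $\Mepistemic, v \models \varphi'$, and since $v$ was an arbitrary $\sim_a$-neighbour of $w$ we conclude $\Mepistemic, w \models \ModK{a}{\varphi'}$.

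I expect the base case to be the one place where the hypotheses must be handled with care, since the morphism only guarantees that atomic valuations agree on the alive agents $\Alives{w}$, through preservation of atomic formulas. This is exactly why the formulas must be \emph{guarded}: the antecedent $\Paliveop{B}$ certifies $B \subseteq \Alives{w}$ and thereby licenses comparing the propositional kernel $\psi$, whose atoms lie in $\At_B$, between the two models. Without such a guard an atom concerning a dead agent could differ between $w$ and $w'$ and the transport would break down, so the restriction to $\ModLangKPlus$ is essential rather than cosmetic.
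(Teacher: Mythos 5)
Your proof is correct: the strengthened induction hypothesis quantified over all pairs $(w,w')$ with $w'\in f(w)$ is exactly what is needed to close the $\ModK{a}$ case, and each morphism property is invoked in the right place (in particular, using preservation of $\sim$ with $u=u'=w'$ to transfer $\Paliveop{B}$ from $w$ to $w'$, and saturation only for nonemptiness of $f(v)$). The paper itself gives no proof of Proposition~\ref{prop:knowledgeGainLk}, deferring to the cited work of Goubault, Ledent, and Rajsbaum, and your argument is the standard structural induction used there, so there is nothing to flag.
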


\subsection{Simplicial models and the equivalence with partial epistemic models}
\label{subsec:simplicialModel}
\label{subsec:catEquivalence}

From the topological structure of a given (not necessarily pure) simplicial complex,
we can derive \keywd{simplicial models} \cite{STACS22:GoubaultLedentRajsbaum}.
\begin{definition}[simplicial models]
    A \keywd{simplicial model} $\Complex$ is a quadruple 
    $\Tuple{V, S, \coloring, \labSM}$ consisting of:
    \begin{itemize}
        \item The triple $\Tuple{V, S, \coloring}$ of the underlying complex;
        \item The \keywd{labeling} $\labSM$ 
        that assigns a set $\labSM(X)$ of atomic propositions to each facet $X$ of the complex.
    \end{itemize}

\end{definition}
The set $\labSM(X)$ of atomic propositions 
determines the local states of agents in a given global state represented by $X$.

From a simplicial model $\Complex = \Tuple{V, S, \coloring, \labSM}$, 
we can derive a partial epistemic model $\anglpair{\Facet(\cplC),\sim,L}$  such that
\begin{itemize}
    \item The set of possible worlds is $\Facet(\Complex)$, the set of facets of 
    the complex $\Complex$;
    \item The indistinguishability relation is defined by 
    $X \sim_a Y$ if and only if $a\in \coloring (X \cap Y)$. That means, 
    the facets $X$ and $Y$ 
    sharing a common vertex of color~$a$ are indistinguishable by the agent~$a$;
    \item The labeling $L$ on possible worlds is defined by 
    $L(X)= \labSM(X)$ 
    for each $X\in\Facet(\Complex)$.
\end{itemize}

The partial epistemic model derived from a simplicial model in this way 
is a proper Kripke model.  
A Kripke model $\Mepistemic=\Tuple{W,\sim,L}$
is called \keywd{proper} if
$w \neq w'$ implies $\exists a \in \Ag.~w \not\sim_a w'$ for all $w, w' \in W$.
It has been shown that 
the category $\PEMCat$ of proper partial epistemic models 
is equivalent to the category $\SimpMCat$ of simplicial models 
\cite{STACS22:GoubaultLedentRajsbaum}, 
where a morphism $f$ from $\Complex = \Tuple{V, S, \coloring, \labSM}$ 
to $\Complex' = \Tuple{V', S', \coloring', \labSM'}$ 
in $\SimpMCat$ is a simplicial map 
$f: \Complex \to \Complex'$ satisfying 
$\labSM(X)\cap\At_{\coloring(X)} = \labSM'(Y)\cap\At_{\coloring(X)}$ 
for every $X\in \Facet(\Complex)$ and $Y\in\Facet(\Complex')$ 
such that $f(X)\subseteq Y$. 

Due to this equivalence, in the sequel
we will occasionally confuse simplicial models with proper partial epistemic models.
We regard facets of a complex $\cplC$, say $X, Y \in \Facet(\cplC)$, 
as possible worlds of the corresponding partial epistemic model
and argue the indistinguishability by the property of the facets.
For example, 
$X \sim_a Y$ in a partial equivalence model is interpreted as 
$a \in \coloring(X\cap Y)$;
The notation~$\AliveSet{X}$, which denotes the set of live agents 
in a facet $X$ of a simplicial model, is given by $\AliveSet{X}=\coloring(X)$.  

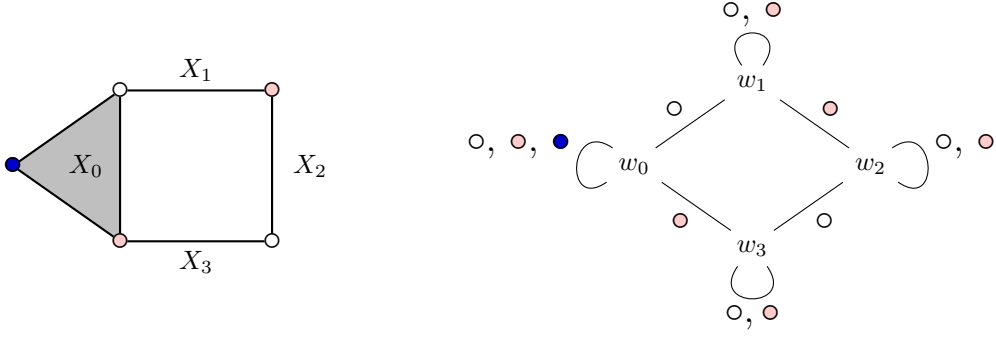
\begin{figure}[t]
    \hspace*{.05\textwidth}%
    \begin{minipage}{.28\textwidth}
        \begin{tikzpicture}
            \draw[thick] (-1, -1) -- (1, -1) -- (1, 1) -- (-1, 1);
            \draw[fill=lightgray, thick] (-1, 1) -- (-1, -1) -- (-2.41, 0) -- cycle;
            \draw (1.5,0) node {$X_{2}$};
            \draw (0,1) node[anchor=south] {$X_{1}$};
            \draw (0,-1) node[anchor=north] {$X_{3}$};
            \draw (-1.1,0) node[anchor=east] {$X_{0}$};
            \draw (-2.41,0) node {\Large $\nodeB$};
            \draw (-1, 1) node {\Large $\nodeW$};
            \draw (1, 1) node {\Large $\nodeR$};
            \draw (1, -1) node {\Large $\nodeW$};
            \draw (-1, -1) node {\Large $\nodeR$};
        \end{tikzpicture}
    \end{minipage}%
    ~\hfil\hfil~ %
    \begin{minipage}{.5\textwidth}
        \begin{tikzcd}
            & w_{1} \arrow[rd, "\text{\Large $\nodeR$}", no head] \arrow["\text{\Large \nodeW, \nodeR}"', no head, loop, distance=2em, in=125, out=55] &          \\
            w_{0} \arrow[ru, "\text{\Large \nodeW}", no head] \arrow[rd, "\text{\Large \nodeR}"', no head] \arrow["\text{\Large \nodeW, \nodeR, \nodeB}"', no head, loop, distance=2em, in=215, out=145] &    & w_{2} \arrow[ld, "\text{\Large \nodeW}", no head] \arrow["\text{\Large \nodeW, \nodeR}"', no head, loop, distance=2em, in=35, out=325] \\
            & w_{3} \arrow["\text{\Large $\nodeW, \nodeR$}"', no head, loop, distance=2em, in=305, out=235]                                     &
        \end{tikzcd}
    \end{minipage} 

    \caption{An impure simplicial model and its corresponding partial epistemic model}
    \label{fig:twoModels}
\end{figure}

\begin{example}
    Figure~\ref{fig:twoModels} depicts a simplicial model $\anglpair{V,S,\coloring,\labSM}$
    (on the left)
    and its corresponding partial epistemic model $\anglpair{M,\sim,L}$ (on the right)
    for $3$ agents with colors $\Ag=\{0,1,2\}$. 
    In the figure, agents $0$, $1$, and $2$ are represented by 
    colored nodes $\nodeW$, $\nodeR$, and $\nodeB$, respectively.  
    Throughout the paper, we will follow this coloring convention.

    For each $i=0,1,2,3$, 
    the facet $X_i$ in the simplicial model corresponds to 
    the possible world~$w_i$ in the partial epistemic model 
    and they have the same set of atomic propositions, i.e., $\ell(X_i) = L(w_i)$. 
    If a pair of facets $X_i$ and $X_j$ share a node of color~$a$, 
    the corresponding possible worlds $w_i$ and $w_j$ are connected by an edge associated 
    with a node of color~$a$, meaning that $w_i \sim_a w_j$.
    Particularly, each possible world $w_i$ has a self-loop colored by $a$ 
    if and only if the agent~$a$ is alive in $w_i$, that is, 
    the corresponding facet $X_i$ contains a vertex colored by~$a$.

\end{example}

\section{Partial Product Update and Task Solvability} \label{sec:PPU}

In \cite{Inf:2021:GoubaultLedentRajsbaum}, 
product update is used to define distributed tasks and protocols. 
A product update model $\Prod{\cplI}{\actMf{A}}$ 
is an epistemic model where each possible world
is a pair $(X,t)$ of a facet $X$ of the complex $\cplI$
and an action $t$ of the action model $\actMf{A}$ 
such that the action $t$ stands for a possible output for 
the input $X$. Task solvability is then 
defined by the existence of a morphism from 
the product update model of the protocol to that of the task. 

We introduce a generalized construction called \keywd{partial product update}, 
which applies to partial epistemic models, and thereby 
refine the logical method so that it can be applied to a larger class of
task solvability problem in which agents may die.

\subsection{Partial product update}
\label{subsec:PPUbyAction}

\begin{definition}[action model]
    An \keywd{action model} $\Action=\langle T, \sim, \PreOp \rangle$ is
    a triple consisting of: 
    \begin{itemize}
        \item A partial epistemic frame $\langle T, \sim \rangle$, 
        where each $t\in T$ is called an \keywd{action};
        \item A function $\PreOp: T \zto \ModLangK$ that assigns
        an epistemic formula $\PreOp(t)$, called the \keywd{precondition},
        to each action $t\in T$.  
    \end{itemize}

\end{definition}

The precondition $\PreOp(t)$ describes the condition 
for the action $t$ to occur.

\begin{definition}[partial product update]
    \label{def:partialProductUpdate}
    Let $\Mepistemic = \KripkeTuple{\Mepistemic}$ be a partial epistemic model
    and $\Action = \ActionTuple{\Action}$ be an action model,
    both having a proper partial epistemic frame. 
    A \keywd{partial product update} of $\Mepistemic$ by $\Action$, 
    written $\ImProd{\Mepistemic}{\Action}$, 
    is a partial epistemic model $\KripkeTuple{}$ defined by:
    \begin{itemize}
        \item $W = \{
                  (\PreEquiv{t}{X}, t)
                  \mid
                  X \in W^{\Mepistemic},\:
                  t \in W^{\Action},\:
                  \AliveSet{t}\subseteq \AliveSet{X}, \text{ and }
                  \Mepistemic, X \models \Pre{t}
                  \}$, 

        \item $
                  (\PreEquiv{t}{X}, t) \sim_a (\PreEquiv{s}{Y}, s)$ iff
                  $X \sim^{\Mepistemic}_a Y$ and $t \sim^{\Action}_a s
              $, and 

        \item $
                  L\bigl((\PreEquiv{t}{X}, t)\bigr) =
                  \displaystyle\bigcap_{X' \in \PreEquiv{t}{X}} L^{\Mepistemic}(X')
              $, 
    \end{itemize}
    where $\PreEquiv{t}{X}$ is the set of facets defined by 
    $\PreEquiv{t}{X} = \{
        Y \in W^{\Mepistemic}
        \mid
        X \sim^{\Mepistemic}_{\Alives{t}} Y
        \text{ and }
        \Mepistemic, Y \models \Pre{t}
        \}$.

\end{definition}

\begin{proposition}
    The partial product update 
    $\ImProd{\Mepistemic}{\Action}$ is a partial epistemic model.
    In particular, if both  $\Mepistemic$ and $\Action$ are proper, 
    so is $\ImProd{\Mepistemic}{\Action}$.
\end{proposition}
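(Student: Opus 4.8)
The plan is to check directly that $\ImProd{\Mepistemic}{\Action} = \Tuple{W, \sim, L}$ satisfies the three defining clauses of Definition~\ref{def:PEmodel} --- $W$ finite (and nonempty), each $\sim_a$ a well-defined PER, and $L$ a well-defined labeling --- and then to verify properness under the additional hypothesis. Since a world is a pair whose first coordinate is itself a set, namely the class $\PreEquiv{t}{X}$, the same world can be written with different facet representatives, so the first and most delicate obligation is to confirm that the clause defining $\sim_a$ does not depend on the chosen representatives.

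To settle well-definedness I would first record that $X \in \PreEquiv{t}{X}$: the two membership requirements imposed on $W$, namely $\AliveSet{t} \subseteq \AliveSet{X}$ and $\Mepistemic, X \models \Pre{t}$, are exactly $X \sim^{\Mepistemic}_{\AliveSet{t}} X$ together with $\Mepistemic, X \models \Pre{t}$. Consequently $\PreEquiv{t}{X} = \PreEquiv{t}{X'}$ forces $X \sim^{\Mepistemic}_{\AliveSet{t}} X'$. Now suppose $(\PreEquiv{t}{X}, t) \sim_a (\PreEquiv{s}{Y}, s)$, i.e. $X \sim^{\Mepistemic}_a Y$ and $t \sim^{\Action}_a s$. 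The second conjunct forces $a \in \AliveSet{t}$, because symmetry and transitivity of $\sim^{\Action}_a$ turn $t \sim^{\Action}_a s$ into $t \sim^{\Action}_a t$. Hence from $X \sim^{\Mepistemic}_{\AliveSet{t}} X'$ we extract $X \sim^{\Mepistemic}_a X'$, and the PER laws for $\sim^{\Mepistemic}_a$ give $X' \sim^{\Mepistemic}_a Y$; the symmetric argument handles the representative of the second world. Thus the truth value of $\sim_a$ is independent of representatives. The labeling $L$ is then well-defined for free, as its value depends only on the class $\PreEquiv{t}{X}$.

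With representative-independence secured, the PER axioms for $\sim_a$ transfer componentwise: the defining clause is the conjunction of $X \sim^{\Mepistemic}_a Y$ and $t \sim^{\Action}_a s$, so symmetry and transitivity of $\sim_a$ follow immediately from those of $\sim^{\Mepistemic}_a$ and $\sim^{\Action}_a$ (for transitivity one reuses a single representative of the middle world, justified by the previous step). Finiteness of $W$ is clear since $W \subseteq \PowerSet{W^{\Mepistemic}} \times W^{\Action}$; the only clause not automatic is nonemptiness, which holds exactly when some pair $(X, t)$ meets the membership conditions and which I take as the standing premise of the construction.

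For the second assertion, assume $\Mepistemic$ and $\Action$ proper and take distinct worlds $(\PreEquiv{t}{X}, t)$ and $(\PreEquiv{s}{Y}, s)$. If $t \neq s$, properness of $\Action$ yields an $a$ with $t \not\sim^{\Action}_a s$, which alone breaks $\sim_a$. If instead $t = s$ while $\PreEquiv{t}{X} \neq \PreEquiv{t}{Y}$, then $X$ and $Y$, both in the domain of the PER $\sim^{\Mepistemic}_{\AliveSet{t}}$, cannot be $\sim^{\Mepistemic}_{\AliveSet{t}}$-related (otherwise their classes would coincide), so some $a \in \AliveSet{t}$ has $X \not\sim^{\Mepistemic}_a Y$, again breaking $\sim_a$. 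Hence distinct worlds are always separated by some agent, and $\ImProd{\Mepistemic}{\Action}$ is proper. I expect the one genuine obstacle to be the representative-independence argument above, where the alive set $\AliveSet{t}$ forming the class and the index $a$ of the relation must be matched up; everything else is bookkeeping inherited from the component PERs.
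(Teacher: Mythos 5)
Your proof is correct and follows essentially the same route as the paper's: first establish that the clause defining $\sim_a$ is independent of the chosen facet representatives, then inherit symmetry and transitivity componentwise from $\sim^{\Mepistemic}_a$ and $\sim^{\Action}_a$, and finally prove properness by the case split on $t \neq s$ versus $\PreEquiv{t}{X} \neq \PreEquiv{s}{Y}$. Your treatment is in fact slightly more careful than the paper's in making explicit that $t \sim^{\Action}_a s$ forces $a \in \AliveSet{t}$ and that $X \in \PreEquiv{t}{X}$, which is exactly the matching of the alive set to the relation index that the paper's chain $X' \sim^{\Mepistemic}_a X \sim^{\Mepistemic}_a Y \sim^{\Mepistemic}_a Y'$ relies on implicitly.
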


\begin{proof}  
    Let $\ImProd{\Mepistemic}{\Action}=\KripkeTuple{}$ be a triple, 
    which is constructed from a partial epistemic model
    $\Mepistemic = \KripkeTuple{\Mepistemic}$ 
    and an action model $\Action = \ActionTuple{\Action}$, 
    as in Definition~\ref{def:partialProductUpdate}.

    Let us first show that the relation $\sim_a$ is well-defined,
    that is, the choice of facet $X$ in $(\PreEquiv{t}{X}, t)$ 
    does not affect the relation.  
    Suppose $(\PreEquiv{t}{X}, t) \sim_a (\PreEquiv{s}{Y}, s)$,
    $\PreEquiv{t}{X} = \PreEquiv{t}{X'}$, and
    $\PreEquiv{s}{Y} = \PreEquiv{s}{Y'}$. 
    By the definition, $X \sim^{\Mepistemic}_a Y$, 
    $\Mepistemic, X\models \PreOp(t)$, $\Mepistemic, X'\models \PreOp(t)$, 
    $\Mepistemic, Y\models \PreOp(s)$, $\Mepistemic, Y'\models \PreOp(s)$,    
    and $t \sim^{\Action}_a s$.
    Since $X' \in \PreEquiv{t}{X'} = \PreEquiv{t}{X}$, we have 
    $X \sim^{\Mepistemic}_a X'$ for all $a\in\Alives{t}$. Similarly, 
    $Y \sim^{\Mepistemic}_a Y'$ for all $a\in\Alives{t}$. 
    Hence $X' \sim^{\Mepistemic}_a X \sim^{\Mepistemic}_a Y \sim^{\Mepistemic}_a Y'$
    and therefore $(\PreEquiv{t}{X'}, t) \sim_a (\PreEquiv{s}{Y'}, s)$.

    We show that $\ImProd{\Mepistemic}{\Action}$ is a partial epistemic model.
    It suffices to show that every $\sim_a$ is a PER. 
    This follows immediately from the symmetry and transitivity 
    of $\sim^{\Mepistemic}_a$ and $\sim^{\Action}_a$.

    Suppose $\Mepistemic$ and $\Action$ are both proper models. 
    We show that  $\ImProd{\Mepistemic}{\Action}=\KripkeTuple{}$ is a proper model.
    Assume $(\PreEquiv{t}{X}, t), (\PreEquiv{s}{Y}, s) \in W$
    are possible worlds such that $(\PreEquiv{t}{X}, t) \neq (\PreEquiv{s}{Y}, s)$, 
    which implies $\PreEquiv{t}{X} \neq \PreEquiv{s}{Y}$ or $t \neq s$. 
    It suffices to show that there exists an agent $a \in \Ag$
    such that $(\PreEquiv{t}{X}, t) \not\sim_a (\PreEquiv{s}{Y}, s)$.
    If $t \neq s$, since $\Action$ is proper, 
    $t \not\sim^{\Action}_a s$ for some $a \in \Ag$. Thus, 
    $(\PreEquiv{t}{X}, t) \not\sim_a (\PreEquiv{s}{Y}, s)$.
    Suppose otherwise, i.e., $\PreEquiv{t}{X} \neq \PreEquiv{s}{Y}$.
    We may assume, without loss of generality, that 
    $X' \not\in \PreEquiv{t}{Y}$ for some $X' \in \PreEquiv{t}{X}$.
    This implies $\PreEquiv{t}{X} = \PreEquiv{t}{X'}$ and $\Mepistemic, X' \models \Pre{t}$.
    Hence, $X' \not\in \PreEquiv{t}{Y}$ implies
    $X' \not\sim^{\Mepistemic}_a Y$ for some agent $a \in \Alives{t}$.
    Therefore we have $(\PreEquiv{t}{X}, t) = (\PreEquiv{t}{X'}, t) \not\sim_a (\PreEquiv{t}{Y}, t) = (\PreEquiv{s}{Y}, s)$.
\end{proof}

The original product update $\Prod{\Mepistemic}{\actMf{A}}$ \cite{Inf:2021:GoubaultLedentRajsbaum}
is an instance of the partial product update $\ImProd{\Mepistemic}{\actMf{A}}$.
When the indistinguishability relations of both $\Mepistemic$ and $\actMf{A}$ are equivalence relations, 
rather than partial equivalence relations, and $\Mepistemic$ is proper, 
every possible world $(\PreEquiv{t}{X},t)$ in $\ImProd{\Mepistemic}{\actMf{A}}$ 
can be identified with $(X,t)$ in $\Prod{\Mepistemic}{\actMf{A}}$.
This is because $\PreEquiv{t}{X}=\{ X \}$, since 
$\Alives{t} = \Ag$ for all $t \in \Action$ and $\Mepistemic$ is proper.

\subsection{Task solvability in partial product update models}
\label{subsec:tasksolv}

In the sequel, we assume that the set $\At$ of atomic propositions 
is given by $\At = \bigcup_{a\in\Ag} \At_a$ with
$\At_a=\{\Pinput{a}{v}\mid v\in\Value\}$ for each $a\in\Ag$,
where $\Value$ is the set of possible input values.  
Without loss of generality, we may assume that 
the set of input values coincides with the set of agent ids, i.e.,
$\Value=\Ag=\{0,\cdots, n-1\}$.

The input model for a system with $n$ agents is given as follows.

\begin{definition}[input simplicial model]\label{def:inputsimplicialmodel}
    An \keywd{input simplicial model} $\Inpu = \Tuple{V, S, \chi, \labSM}$ consists of:
    \begin{itemize}
        \item The complex $\Tuple{V, S, \chi}$ consisting of 
        the set of vertexes $V = \{ (a, v) \mid a \in \Ag,$ $v \in \Value \}$, 
        the coloring map defined by $\chi\bigl((a, v)\bigr) = a$, and the set of 
        simplexes $S = \{
            X \mid
            \emptyset \subsetneq X \subseteq \{
            (0, v_{0}), \ldots, (n-1, v_{n-1})
            \}$ where 
            $v_{0}, \ldots, v_{n-1} \in \Value \}$;
        \item The labeling defined by $\labSM(X) = \{ \Pinput{a}{v} \mid (a,v)\in X\}$
        for each facet $X$ of the complex.
    \end{itemize}
\end{definition}

In the rest of the paper, we write $\cplI$ to refer to this particular input simplicial model. 
The underlying simplicial complex $\Tuple{V, S, \chi}$ of 
$\Inpu$ is a pure complex of dimension~$n-1$, where
each vertex $(a,v)$ represents a local state of 
the agent~$a$ that has $v$ as the input value. 
The set of facets in the complex represents all the possible assignments of input values 
to the agents, 
and the labeling $\labSM$ interprets such assignments by 
giving the relevant set of atomic propositions for each facet $X$, 
so that $\Pinput{a}{v}\in \labSM(X)$ if and only if $(a,v)\in X$.

For example, Figure~\ref{fig:inputcpl_n2} illustrates the underlying complex of 
the input simplicial model $\Inpu$ for $2$~agents (i.e, $\Ag=\{0,1\}$).
In the figure, we write $X_{ij}$ to stand for a facet $\{(0,i),(1,j)\}$ of~$\cplI$. 
The labeling is determined by $\labSM(X_{ij})= \{ \Pinput{0}{i}, \Pinput{1}{j} \}$.

\begin{figure}[ht]
    \centering
    \begin{tikzpicture}[scale=0.88]
        \draw[thick](1,1)--(1,-1)--(-1,-1)--(-1,1)--cycle;

        \draw(-1,1)node{{\Large $\nodeW$}};
        \draw(-1,-1)node{{\Large $\nodeR$}};
        \draw(1,1)node{{\Large $\nodeR$}};
        \draw(1,-1)node{{\Large $\nodeW$}};

        \draw(-1.2,1.3)node{$(0,0)$};
        \draw(1.2,1.3)node{$(1,0)$};
        \draw(-1.2,-1.3)node{$(1,1)$};
        \draw(1.2,-1.3)node{$(0,1)$};

        \draw(0,1.3)node{$X_{00}$};
        \draw(-1.5,0)node{$X_{01}$};
        \draw(1.5,0)node{$X_{10}$};
        \draw(0,-1.3)node{$X_{11}$};
    \end{tikzpicture}
    \caption{The input simplicial model $\cplI$ for $n=2$}
    \label{fig:inputcpl_n2}
\end{figure}
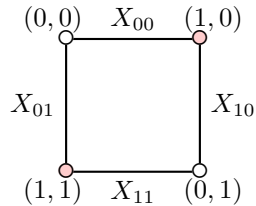

Now we define task solvability, using partial product update. 

\begin{definition}    \label{def:kripkesolvability}
Let $\cplI$ be the input simplicial model and let 
$\actMf{P}$ and $\actMf{T}$ be action models for a protocol and a task, respectively.
A task $\actMf{T}$ is \keywd{solvable} by the protocol $\actMf{P}$ 
if there exists a morphism $\DeltaKrip: \ImProd{\cplI}{\actMf{P}} \to \ImProd{\cplI}{\actMf{T}}$ 
such that %
\begin{equation} \label{eq:kripsolve}
    \forall (\preEqClass{p}{X_p},p) \in \ImProd{\cplI}{\actMf{P}}.\: 
    \exists (\preEqClass{t}{X_t}, t)\in\deckrip\bigl((\preEqClass{p}{X_p}, p)\bigr). \:
    \preEqClass{p}{X_p}\subseteq \preEqClass{t}{X_t}.
\end{equation}
\end{definition}

Schematically, this definition of solvability can be presented by the following 
diagram-like picture:
\[ \xymatrix{
    \cplI \ar @{} [d] |{\rotatebox{270}{$\subseteq$}} & \ImProd{\cplI}{\actMf{P}} \ar[l]_{\proj_1}\ar[d]^-{\exists \DeltaKrip}\\
    \cplI    & \ImProd{\cplI}{\actMf{T}} \ar[l]^{\proj_1} 
    } \]
where $\proj_1$ is the first projection.

Suppose a morphism $\DeltaKrip$ associates
$(\preEqClass{p}{X_p}, p)\in \cplI\{\actMf{P}\}$ with
$(\preEqClass{t}{X_t}, t)\in \cplI\{\actMf{T}\}$. 
Then, in order for this association to be admissible, 
it must respect the inclusion of~\eqref{eq:kripsolve},
meaning that,  when $\DeltaKrip$ decides an output $t$ from 
an intermediate output $p$ by the protocol, 
every input contained in 
$\preEqClass{p}{X_p}$, i.e., the set of inputs from which the protocol may produce the output $p$, 
must be an input contained in $\preEqClass{t}{X_t}$, i.e., the set of inputs 
that admit $t$ as the output of the task.

This generalizes the original definition of task solvability given in \cite{Inf:2021:GoubaultLedentRajsbaum},
which makes use of (non-partial) product updates. 
The original definition is given by the following commutative diagram, which is obtained by
replacing the models by the product update models, written 
$\cplI[\actMf{P}]$ and $\cplI[\actMf{T}]$, and 
the inclusion in~\eqref{eq:kripsolve} by equality $=$.
\[ \xymatrix{
    & \Prod{\cplI}{\actMf{P}} \ar[dl]_{\proj_1}\ar[d]^-{\exists \delta} \\
    \cplI    & \Prod{\cplI}{\actMf{T}} \ar[l]^{\proj_1} 
    } \]

This definition of the task solvability using product update models is 
equivalent to the topological definition using simplicial complexes \cite{Inf:2021:GoubaultLedentRajsbaum}.  
We can also establish the same equivalence for partial product update models, in the following sense:
Suppose we are given a pair of functions over simplicial complexes 
specifying a task and a protocol. From these specifications we can derive 
the corresponding partial product update models.
Then the task is solvable by the protocol, using partial product update models 
as defined in Definition~\ref{def:kripkesolvability},  
if and only if the task is solvable in the topological sense, i.e., there exists a 
simplicial map that mediates between the complexes of the protocol and the task.

The formal argument for this equivalence is deferred to Appendix~\ref{sec:solvablEq}. 
There  we also show that 
the action model $\actMf{A}$ and 
the partial product update model $\ImProd{\cplI}{\actMf{A}}$ 
have isomorphic partial epistemic frames, under a suitable condition. 

 \subsection{Logical obstruction theorem}
 \label{subsec:logicalObstruction}

Combining the above task solvability with the knowledge gain property (Proposition~\ref{prop:knowledgeGainLk}), 
we obtain a proof method for refuting task solvability by a formula of epistemic logic, 
called a \keywd{logical obstruction}~\cite{Inf:2021:GoubaultLedentRajsbaum}.
A logical obstruction $\varphi$ is a guarded positive formula that is valid in the 
partial product update model $\ImProd{\cplI}{\actMf{T}}$ of a task, but not in the
model $\ImProd{\cplI}{\actMf{P}}$ of a protocol. 

\begin{theorem}\label{thm:LOTheorem}
    Let $\ImProd{\cplI}{\actMf{P}}$ and $\ImProd{\cplI}{\actMf{T}}$ be partial product update models
    of a protocol and a task, respectively. 
    If there exists a guarded positive formula $\varphi \in \ModLang_{\Modalfont{K}, \Palive}^{+}$  
    such that $\ImProd{\cplI}{\actMf{T}} \models \varphi$ but $\ImProd{\cplI}{\actMf{P}} \not\models \varphi$, 
    the task is not solvable by the protocol. 
\end{theorem}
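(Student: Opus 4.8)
The plan is to prove the contrapositive: assuming the task $\actMf{T}$ is solvable by the protocol $\actMf{P}$, I would show that every guarded positive formula valid in $\ImProd{\cplI}{\actMf{T}}$ is also valid in $\ImProd{\cplI}{\actMf{P}}$, which rules out the existence of any obstruction formula $\varphi$. By Definition~\ref{def:kripkesolvability}, solvability supplies a morphism $\DeltaKrip\colon \ImProd{\cplI}{\actMf{P}} \to \ImProd{\cplI}{\actMf{T}}$ satisfying the inclusion condition~\eqref{eq:kripsolve}. The engine of the argument is the knowledge gain property (Proposition~\ref{prop:knowledgeGainLk}), which transports the truth of a guarded positive formula backwards along this morphism, from the task model (the codomain) to the protocol model (the domain). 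Note that both product update models are genuine partial epistemic models by the preceding proposition, so the hypotheses of knowledge gain are applicable.

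Concretely, I would fix an arbitrary guarded positive $\varphi \in \ModLangKPlus$ with $\ImProd{\cplI}{\actMf{T}} \valid \varphi$ and an arbitrary world $(\preEqClass{p}{X_p}, p)$ of $\ImProd{\cplI}{\actMf{P}}$. Condition~\eqref{eq:kripsolve} directly produces a world $(\preEqClass{t}{X_t}, t) \in \deckrip\bigl((\preEqClass{p}{X_p}, p)\bigr)$ (equivalently, the saturation clause in the definition of a morphism guarantees that this image is nonempty). Since $\varphi$ is valid in the task model, it holds in particular at that world, i.e.\ $\ImProd{\cplI}{\actMf{T}}, (\preEqClass{t}{X_t}, t) \valid \varphi$. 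Applying Proposition~\ref{prop:knowledgeGainLk} to $\DeltaKrip$ with this pair of worlds then yields $\ImProd{\cplI}{\actMf{P}}, (\preEqClass{p}{X_p}, p) \valid \varphi$. Because the chosen world was arbitrary, $\ImProd{\cplI}{\actMf{P}} \valid \varphi$ follows, contradicting the hypothesis $\ImProd{\cplI}{\actMf{P}} \nvalid \varphi$.

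The step that requires the most care is simply verifying that the hypotheses of knowledge gain are genuinely met: $\DeltaKrip$ must be a morphism between the two partial product update models, and the pair of worlds must satisfy the containment $(\preEqClass{t}{X_t}, t) \in \DeltaKrip\bigl((\preEqClass{p}{X_p}, p)\bigr)$ demanded by Proposition~\ref{prop:knowledgeGainLk}. Both are immediate here --- the former from Definition~\ref{def:kripkesolvability}, the latter from the existential witness in~\eqref{eq:kripsolve} --- so once these are pinned down the result is essentially a one-line consequence of knowledge gain. I would remark that the inclusion $\preEqClass{p}{X_p} \subseteq \preEqClass{t}{X_t}$ of~\eqref{eq:kripsolve}, although central to justifying that $\DeltaKrip$ is a well-formed solution that respects inputs, is not itself invoked in this deduction: the obstruction argument needs only that $\DeltaKrip$ is a morphism with nonempty images, since the preservation of atomic input propositions is already built into the definition of a morphism.
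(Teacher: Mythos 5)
Your proposal is correct and follows essentially the same route as the paper's proof: both reduce the statement to the knowledge gain property (Proposition~\ref{prop:knowledgeGainLk}) applied to the morphism supplied by solvability, the only cosmetic difference being that the paper transports falsity forward from a protocol world to some task world while you transport truth backward, which is the same implication read contrapositively. Your closing remark that the inclusion $\preEqClass{p}{X_p} \subseteq \preEqClass{t}{X_t}$ of~\eqref{eq:kripsolve} is not actually used --- only morphism-hood and nonemptiness of images matter --- is accurate and matches what the paper's proof implicitly relies on.
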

\begin{proof}
    Suppose that $\varphi$ is a guarded positive formula such that 
    $\ImProd{\cplI}{\actMf{T}} \models \varphi$ but $\ImProd{\cplI}{\actMf{P}} \not\models \varphi$
    and, by contradiction, that the task is solvable, i.e., there exists 
    a morphism $\delta$ that satisfies the condition~\eqref{eq:kripsolve}. 
    The invalidity for the protocol $\ImProd{\cplI}{\actMf{P}} \not\models \varphi$ 
    means that  $\varphi$ is false for some world $w$ of $\ImProd{\cplI}{\actMf{P}}$.
    Then Proposition~\ref{prop:knowledgeGainLk} implies that there exists a world $w'$ of 
    $\ImProd{\cplI}{\actMf{T}}$ such that $\ImProd{\cplI}{\actMf{T}}, w'\not\models \varphi$. 
    This contradicts to the validity of $\varphi$ in $\ImProd{\cplI}{\actMf{T}}$. 
\end{proof}

\section{Action Model for Synchronous Message Passing Protocol}
\label{sec:SMPactionmodel}

\subsection{Actions as posets of rank at most $1$} \label{subsec:actionAsPosets}

In the synchronous message passing protocol,
each agent sends its local private value to all the agents in the system.
When an agent crashes,
only the messages that have already been sent before the crash
are properly delivered to other agents; the rest are not.
In a synchronous distributed environment,
a correct agent can detect the exact set of crashed agents in the system:
An agent is dead (i.e., it has crashed), if no message is received from
the agent within a sufficiently long but finite period of time.
The protocol complex for the synchronous message passing protocol thus
forms an impure simplicial complex \cite{S1571:2001:HerlihyRajsbaumTuttle},
in which the vertexes corresponding to dead agents are missing in a facet.

In the impure protocol complex, 
each different facet arises from each different `pattern of failure' in
message delivery, where each specific pattern of failure
can be described by a set of orderings of the form $a < b$,
which means that the agent~$a$ has crashed and failed to send the message to the agent~$b$, 
a correct agent that did not crash.
This motivates us to define actions by relevant orderings over the set of agents
that satisfy the following properties.
\begin{itemize}
    \item An agent $a$ is dead after an action (i.e., the agent has crashed during the execution of the protocol)
          if and only if the corresponding poset of the action admits an ordering $a<b$ for some agent~$b$.
    \item An agent $a$ is alive after an action
          if and only if the corresponding poset of the action does not admit
          an ordering $a<b$ for any agent~$b$.
    \item If a poset contains an ordering $a<b$,
          the poset does not admit an ordering $c<a$ or $b<c$ for any agent~$c$.
          (This is because the ordering $a<b$ indicates that $a$ is a dead agent and
          $b$ is a correct agent but $c<a$ or $b<c$ implies otherwise.)
\end{itemize}

This ordering can be formalized as a poset (partially ordered set)
$(\Ag, \leq)$ of rank at most $1$.
A poset is of \keywd{rank}~$k$,
if the maximum number of elements contained
in a totally ordered subset of the poset is $k+1$. \cite{Book:2011:Stanley}
As usual, the strict ordering $a<b$ holds if and only if $a\leq b$ and $a\neq b$.

Let us define $\Lo(\leq) = \{ a \in \Ag \mid \exists b \in \Ag, a < b \}$
to denote the set of dead agents.
Notice that the reflexive ordering $a\leq a$ is not counted as a failure of message delivery.
We are not interested in the success or failure of self-delivery of message:
If successful, the agent~$a$ trivially receives its own local private value;
Otherwise, the agent~$a$ is dead and its received messages are lost.
In this respect, we indicate a poset by a set $S$ of strict orderings
and write $\Rcls{S}$ to denote the minimum poset containing $S$, i.e., the reflexive closure of $S$.
For example, $\Rcls{\{0<1, 2<3\}}$ defines a poset ordering $\leq$ over $\Ag$ such that
$a\leq b$ holds if and only if either $a=0$ and $b=1$,  $a=2$ and $b=3$, or $a=b$.
In particular, $\Rcls{\emptyset}$ is a discrete poset (of rank $0$) over $\Ag$,
in which any pair of distinct agents are incomparable.

\subsection{Constructing the action model $\ActSMP$} 
\label{subsec:constActModel}

Let us first consider the action model for the `inputless' case,
where the initial inputs to the agents are insignificant.
In this case, the actions for the synchronous message passing protocol is simply modeled 
by the posets over $\Ag$ of rank at most $1$.

\begin{definition} \label{def:smp_one}
    The \keywd{inputless action model} for synchronous message passing protocol is
    a partial epistemic frame 
    $\ActOneSMP = \langle T, \sim\rangle$ consisting of:
    \begin{itemize}
        \item The set of actions $
            T = \{
            t \mid t \text{ is a poset of rank at most $1$ over $\Ag$} \}$ and

        \item The family of indistinguishability relation, where $t_1 \sim_a t_2$ is defined 
        for each $a\in\Ag$ by: 
        \[
            t_1 \sim_a t_2 ~\text{ iff }~
            a \notin \Lo(t_1) \cup \Lo(t_2) \text{ and }
            \{ b \in \Ag \mid \SendFail{t_1}{b}{a} \} = \{ b \in \Ag \mid \SendFail{t_2}{b}{a} \}.
            \]
    \end{itemize}

    For each $t\in T$, we denote the set of live agents by 
    $\Alives{t} = \Ag \setminus \Lo(t)= \{ a \in \Ag \mid t \sim_a t \} $.    
\end{definition}

\begin{figure}[t]
    \begin{center}
        \begin{tikzpicture}[scale=1.41]
            \draw[fill=lightgray, thick](0,1)--(0.87,-0.5)--(-0.87,-0.5)--cycle;
            \draw(0,0)node{\small $\RefClosure{\emptyset}$};

            \draw[thick](0,1)--(1.5,1.87)--(2.37,0.37)--(0.87,-0.5);
            \draw(0.75,1.8)node{\small $\RefClosure{\{(\nodeW<\nodeR)\}}$};
            \draw(3,1.2)node{\small $\RefClosure{\{(\nodeW<\nodeB),(\nodeW<\nodeR)\}}$};
            \draw(2.05,-0.3)node{\small $\RefClosure{\{(\nodeW<\nodeB)\}}$};

            \draw[thick](0,1)--(-1.5,1.87)--(-2.37,0.37)--(-0.87,-0.5);
            \draw(-0.75,1.8)node{\small ~ $\RefClosure{\{(\nodeR<\nodeW)\}}$};
            \draw(-3,1.2)node{\small $\RefClosure{\{(\nodeR<\nodeB),(\nodeR<\nodeW)\}}$};
            \draw(-2.05,-0.3)node{\small $\RefClosure{\{(\nodeR<\nodeB)\}}$};

            \draw[thick](0.87,-0.5)--(0.87,-2.23)--(-0.87,-2.23)--(-0.87,-0.5);
            \draw(1.5,-1.37)node{\small $\RefClosure{\{(\nodeB<\nodeW)\}}$};
            \draw(-1.5,-1.37)node{\small $\RefClosure{\{(\nodeB<\nodeR)\}}$};
            \draw(0,-2.5)node{\small $\RefClosure{\{(\nodeB<\nodeW),(\nodeB<\nodeR)\}}$};

            \draw(0,1.0)node{\Large $\nodeB$};
            \draw(1.5,1.87)node{\Large $\nodeR$};
            \draw(2.37,0.37)node{\Large $\nodeB$};
            \draw(0.87,-0.5)node{\Large $\nodeR$};
            \draw(0.87,-2.23)node{\Large $\nodeW$};
            \draw(-0.87,-2.23)node{\Large $\nodeR$};
            \draw(-0.87,-0.5)node{\Large $\nodeW$};
            \draw(-2.37,0.37)node{\Large $\nodeB$};
            \draw(-1.5,1.87)node{\Large $\nodeW$};

            \draw(0,3.34)node{\Large $\nodeB$};
            \draw(0,3.0)node{\small $\RefClosure{\{(\nodeR<\nodeB),(\nodeW<\nodeB)\}}$};

            \draw(-2.9,-1.67)node{\Large $\nodeW$};
            \draw(-2.5,-1.95)node[anchor=east]{\small $\RefClosure{\{(\nodeB<\nodeW),(\nodeR<\nodeW)\}}$};

            \draw(2.9,-1.67)node{\Large $\nodeR$};
            \draw(2.5,-1.95)node[anchor=west]{\small $\RefClosure{\{(\nodeB<\nodeR),(\nodeW<\nodeR)\}}$};
        \end{tikzpicture}
    \end{center}
    \caption{The inputless action model $\ActOneSMP$ of the synchronous message passing protocol for $3$~agents, 
    with facets annotated by the corresponding posets over $\Ag$}
    \label{fig:3_protKripSmpOne}
\end{figure}
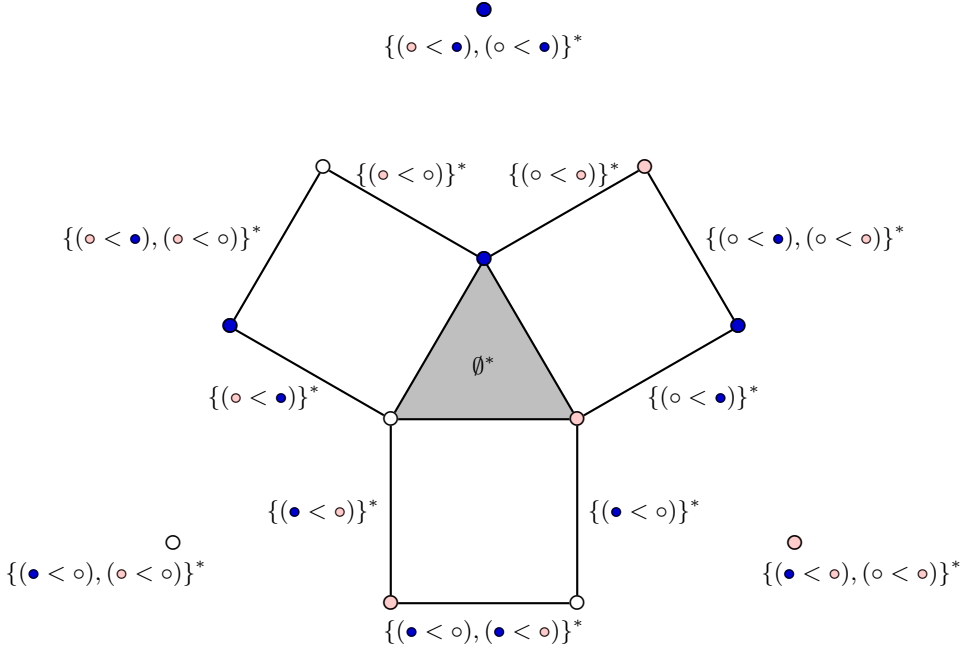

\begin{example}
    Figure~\ref{fig:3_protKripSmpOne} illustrates the complex of 
    the inputless action model $\ActOneSMP$ for $3$~agents. 
    The complex is impure, where the facet corresponding to an action $t$
    is of dimension $\abs{\AliveSet{t}}-1$. %
    When $t \sim_a^{\ActOneSMP} s$ holds, 
    the condition $a \notin \Lo(t) \cup \Lo(s)$ means that  
    the agent $a$ is alive for both actions $t$ and $s$; 
    The condition $\{ b \in \Ag \mid \SendFail{t}{b}{a} \} = \{ b \in \Ag \mid \SendFail{s}{b}{a} \}$
    means that $t$ and $s$ have the same set of agents that failed to send a message to~$a$
    and hence they know exactly the same set of delivered messages. 
    For example, let 
    $t_1=\RefClosure{\{(\nodeB<\nodeR)\}}$, 
    $t_2=\RefClosure{\{(\nodeB<\nodeW), (\nodeB<\nodeR)\}}$, 
    $t_3=\RefClosure{\{(\nodeB<\nodeW)\}}$.  
    Then $t_1 \sim_{\nodeR}^{\ActOneSMP} t_2$ holds because, 
    in both $t_1$ and $t_2$, the agent~$\nodeR$ is alive 
    and has successfully received messages from 
    the agents other than $\nodeB$. 
    On the other hand, $t_1 \sim_{\nodeR}^{\ActOneSMP} t_3$ does not hold
    because the agent~$\nodeW$ in $t_3$ has received the messages from all
    the agents, but the same agent~$\nodeW$ in $t_1$ did not, from the
    agent $\nodeB$. 
\end{example}

From the inputless action model, 
we construct a full action model of the synchronous message passing protocol,
where the protocol accepts a class of sets of input values. 
Suppose $\Inpu$ is a simplicial model, whose facets specify the allowed combinations of input values. 
Then one might expect that an action in the full action model 
would be expressed by a pair $(X, t)$, where $X$ is a facet of $\Inpu$ 
and $t$ is an action of the inputless action,
but this does not give an appropriate model.
Consider the case, say, $t = \RefClosure{\{ b<a \}}$, 
$X \sim_{a}^{\Inpu} Y$, and $X \not\sim_{b}^{\Inpu} Y$. 
In this case, $(X, t)$ and $(Y, t)$ are different actions but the live
agent $a$ should not distinguish them because the action $t$ indicates
that the agent~$a$ has not received a message from the agent~$b$ and
therefore it cannot tell which is the original input, either $X$ or $Y$.
In other words, the two facets $X$ and $Y$ should be identified. 

For this identification of facets w.r.t.\ an action~$t$ of the inputless model,  
we introduce an equivalence class, written $\ActEquiv{t}{X}$,
and define an action in the full action model by a pair $(\ActEquiv{t}{X},t)$.

\begin{definition}[action model of synchronous message passing]
    \label{def:SMP}    \label{def:sameLocalViewForAlives}
    Let $\Inpu = \SimpTuple{\Inpu}$ be an input simplicial model.
    For each facet $X, Y \in \Facet(\Inpu)$ and action $t \in \ActOneSMP$,
    we define an equivalence relation $\SameLocalViewForAlives{t}{X}{Y}$ by:
    \[
        \SameLocalViewForAlives{t}{X}{Y}
        \text{ iff }
        \forall a \in \Alives{t}.\:
        \forall b \in \Ag.\: (
            X \sim^{\Inpu}_b Y \vee \SendFail{t}{b}{a}
            ).
        \]
        We write $\ActEquiv{t}{X}$ to denote the equivalence class of $X$ w.r.t.\ 
        $\SameLocalViewForAlivesOp{t}$. 
    
    We define
    the action model $\ActSMP=\anglpair{T^{\ActSMP}, \sim^{\ActSMP}, \PreOp^{\ActSMP}}$ of
    the synchronous message passing protocol as a triple consisting of:
    \begin{itemize}
        \item The set of actions 
        $T^{\ActSMP} = \{ (\ActEquiv{t}{X}, t) \mid X \in \Facet(\Inpu),~ t \in \ActOneSMP \}$;
        \item The family of indistinguishability relations defined by 
        \[
            (\ActEquiv{t}{X}, t) \sim^{\ActSMP}_a (\ActEquiv{s}{Y}, s)
              ~\text{ iff }~
              t \sim^{\ActOneSMP}_a s
              \text{ and }
              \forall b \in \Ag.\: (X \sim^{\Inpu}_b Y \vee \SendFail{t}{b}{a});
          \]
        \item The precondition defined by
        $
        \PreOp^{\ActSMP} \bigl((\ActEquiv{t}{X}, t)\bigr)
                  = \bigvee \{
                  \bigwedge \labSM^{\Inpu}(Y) \mid Y \in \ActEquiv{t}{X}
                  \}
              $. %
    \end{itemize}
\end{definition}

In words, $\SameLocalViewForAlives{t}{X}{Y}$ indicates that,
for every agent $a$ that is alive in the action $t$, 
$X$ and $Y$ should have the same input for the agent~$b$, if 
$\NotSendFail{t}{b}{a}$, i.e., the message from~$b$ is successfully 
received by $a$. (Otherwise, the agent~$a$ should be aware that 
$X$ and $Y$ are different facets.)
The relation $\SameLocalViewForAlivesOp{t}$ is an equivalence relation, 
as we will show below. 
The relation $\sim^{\ActSMP}_a$ defines a PER, where 
$(\ActEquiv{t}{X}, t) \sim^{\ActSMP}_a (\ActEquiv{s}{Y}, s)$
indicates that the agent~$a$ is alive
and every agent~$b$ that successfully sends a message to~$a$
(i.e., $\NotSendFail{t}{b}{a}$) has the same input value that has been assigned 
in both $X$ and $Y$.  
The precondition $\PreOp^{\ActSMP} \bigl((\ActEquiv{t}{X}, t)\bigr)$ is 
intended to mean that, for an action $(\ActEquiv{t}{X}, t)\in T^{\ActSMP}$, 
every atomic proposition in $\labSM^{\Inpu}(Y)$ is true 
for some $Y \in \ActEquiv{t}{X}$.

\begin{proposition}
    \label{prop:actSimWellDefined}
    Let $\Inpu$ be an input simplicial model. 
    For every $t\in \ActOneSMP$, $\SameLocalViewForAlivesOp{t}$ is an equivalence relation. 
    Furthermore, the relation $\sim^{\ActSMP}_a$ is a PER and is well-defined. 
    That is, $\sim^{\ActSMP}_a$ is a symmetric transitive relation and is not affected by
    the choice of facet $X$ %
    in $(\ActEquiv{t}{X},t)$.
\end{proposition}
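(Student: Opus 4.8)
The plan is to verify the three assertions in turn, each reducing to the symmetry and transitivity of the underlying relations $\sim^{\Inpu}_b$ and $\sim^{\ActOneSMP}_a$, together with two structural facts about the inputless model recorded earlier: first, that the complex of $\Inpu$ is pure of dimension $n-1$, so that $X \sim^{\Inpu}_b X$ holds for every facet $X$ and every agent $b$ (as $\coloring(X)=\Ag$); and second, that by Definition~\ref{def:smp_one}, $t \sim^{\ActOneSMP}_a s$ forces $a \notin \Lo(t)\cup\Lo(s)$ together with the failure-set equality $\{b \mid \SendFail{t}{b}{a}\} = \{b \mid \SendFail{s}{b}{a}\}$. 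The latter is the device that lets me exchange the guard $\SendFail{t}{b}{a}$ for $\SendFail{s}{b}{a}$ whenever the action coordinates are $a$-indistinguishable.

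For the first claim I would show $\SameLocalViewForAlivesOp{t}$ is reflexive, symmetric, and transitive. Reflexivity follows from purity: the first disjunct $X \sim^{\Inpu}_b X$ of the defining clause always holds. Symmetry is immediate from the symmetry of each $\sim^{\Inpu}_b$. For transitivity, given $\SameLocalViewForAlives{t}{X}{Y}$ and $\SameLocalViewForAlives{t}{Y}{Z}$, I fix $a \in \Alives{t}$ and $b \in \Ag$; the case $\SendFail{t}{b}{a}$ is trivial, and otherwise I extract $X \sim^{\Inpu}_b Y$ and $Y \sim^{\Inpu}_b Z$ and chain them by transitivity of $\sim^{\Inpu}_b$ to get $X \sim^{\Inpu}_b Z$.

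For the second claim, symmetry and transitivity of $\sim^{\ActSMP}_a$ follow the same pattern, the action coordinate being handled by symmetry and transitivity of $\sim^{\ActOneSMP}_a$ and the facet coordinate by the corresponding properties of $\sim^{\Inpu}_b$. The one extra ingredient is the failure-set equality: in the transitive step, combining $t \sim^{\ActOneSMP}_a s$ and $s \sim^{\ActOneSMP}_a u$, for each $b$ with $\NotSendFail{t}{b}{a}$ I first transfer this to $\NotSendFail{s}{b}{a}$, then read off $X \sim^{\Inpu}_b Y$ and $Y \sim^{\Inpu}_b Z$ from the two hypotheses and conclude $X \sim^{\Inpu}_b Z$.

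The step demanding most care is well-definedness. Assuming $\SameLocalViewForAlives{t}{X}{X'}$, $\SameLocalViewForAlives{s}{Y}{Y'}$, and $(\ActEquiv{t}{X}, t) \sim^{\ActSMP}_a (\ActEquiv{s}{Y}, s)$, I fix $b$ with $\NotSendFail{t}{b}{a}$ and assemble the three-link chain $X' \sim^{\Inpu}_b X \sim^{\Inpu}_b Y \sim^{\Inpu}_b Y'$. The middle link is the hypothesis; the outer two come from instantiating the equivalence-class hypotheses at the \emph{live} agent $a$, where I need $a \in \Alives{t}$ and $a \in \Alives{s}$ (both given by $t \sim^{\ActOneSMP}_a s$) and, for the right link, the transfer $\NotSendFail{t}{b}{a} \Rightarrow \NotSendFail{s}{b}{a}$ via the failure-set equality. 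Transitivity of $\sim^{\Inpu}_b$ then gives $X' \sim^{\Inpu}_b Y'$, and since $\SameLocalViewForAlivesOp{t}$ and $\SameLocalViewForAlivesOp{s}$ are equivalence relations the argument is symmetric in the primed and unprimed representatives, yielding the full independence of $\sim^{\ActSMP}_a$ from the chosen facets. I expect no genuinely new idea here; the only obstacle is the bookkeeping of tracking which action, $t$ or $s$, governs the guard for a given $b$ and invoking the failure-set equality at exactly the right moments.
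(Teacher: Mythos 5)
Your proposal is correct and follows essentially the same route as the paper's proof: the same three-part decomposition (equivalence of $\SameLocalViewForAlivesOp{t}$, PER property of $\sim^{\ActSMP}_a$, well-definedness via the chain $X' \sim^{\Inpu}_b X \sim^{\Inpu}_b Y \sim^{\Inpu}_b Y'$), with everything reduced to symmetry/transitivity of $\sim^{\Inpu}_b$ and $\sim^{\ActOneSMP}_a$ plus the failure-set equality supplied by $t \sim^{\ActOneSMP}_a s$. If anything, you are slightly more explicit than the paper in justifying reflexivity via purity of $\cplI$ and in flagging the transfer $\NotSendFail{t}{b}{a} \Rightarrow \NotSendFail{s}{b}{a}$ needed for the $Y \sim^{\Inpu}_b Y'$ link, which the paper leaves implicit.
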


\begin{proof}
    Obviously, $\SameLocalViewForAlivesOp{t}$ is a reflexive symmetric relation. 
    For transitivity, assume $\SameLocalViewForAlives{t}{X}{Y}$ and $\SameLocalViewForAlives{t}{Y}{Z}$.
    Suppose $X \not\sim^{\Inpu}_b Z$. By the transitivity of $\sim^{\Inpu}_b$, 
    we have $X \not\sim^{\Inpu}_b Y$ or $Y \not\sim^{\Inpu}_b Z$. 
    In both cases, we have $\SendFail{t}{b}{a}$ for all $a\in\Ag$. 
    Hence $\SameLocalViewForAlives{t}{X}{Z}$. 

    To show that $\sim^{\ActSMP}_a$ is symmetric, suppose 
    $(\ActEquiv{t}{X}, t) \sim^{\ActSMP}_a (\ActEquiv{s}{Y}, s)$. 
    By the definition, $t \sim^{\ActOneSMP}_a s$ and 
    $\forall b \in \Ag.~(X \sim^{\Inpu}_b Y \vee \SendFail{t}{b}{a})$.
    Since $t \sim^{\ActOneSMP}_a s$ implies 
    $\{ b \in \Ag \mid \SendFail{t}{b}{a} \} = \{ b \in \Ag \mid \SendFail{s}{b}{a} \}$, 
    by the symmetry of $\sim^{\ActOneSMP}_a$ and $\sim^{\Inpu}_b$, 
    we have $(\ActEquiv{s}{Y}, s) \sim^{\ActSMP}_a (\ActEquiv{t}{X}, t)$. 
    The transitivity of $\sim^{\ActSMP}_a$ follows similarly from
    the transitivity of $\sim^{\ActOneSMP}_a$ and $\sim^{\Inpu}_b$.

    For the well-definedness of $\sim^{\ActSMP}_a$, 
    suppose $(\ActEquiv{t}{X}, t) \sim^{\ActSMP}_a (\ActEquiv{s}{Y}, s)$, 
    $\ActEquiv{t}{X}=\ActEquiv{t}{X'}$, and 
    $\ActEquiv{s}{Y}=\ActEquiv{s}{Y'}$. We show
    $(\ActEquiv{t}{X'}, t) \sim^{\ActSMP}_a (\ActEquiv{s}{Y'}, s)$. 
    Since $\sim^{\ActOneSMP}_a$ is a PER, it suffices to show 
    that $\forall b \in \Ag.~(X' \sim^{\Inpu}_b Y' \vee \SendFail{t}{b}{a})$.
    Assume otherwise, i.e., $X' \not\sim^{\Inpu}_b Y'$ and $\NotSendFail{t}{b}{a}$
    holds for some $b\in\Ag$. Then, from
    $(\ActEquiv{t}{X}, t) \sim^{\ActSMP}_a (\ActEquiv{s}{Y}, s)$, 
    we have $a \in \Alives{t}$ and $a \in \Alives{s}$
    and also entail $X \sim^{\Inpu}_b Y$ from $\NotSendFail{t}{b}{a}$. 
    Since $\SameLocalViewForAlives{t}{X}{X'}$, we have $X \sim^{\Inpu}_b X'$.
    Similarly, $Y \sim^{\Inpu}_b Y'$. By the transitivity of 
    $\sim^{\Inpu}_b$, we obtain $X' \sim^{\Inpu}_b Y'$, a contradiction. 
\end{proof}

    \begin{figure}[t]
        \centering
                \begin{tikzpicture}[scale=0.5]
                    \draw[fill=lightgray, thick](2,1)--(2,-1)--(3.73,0)--cycle;

                    \draw[fill=lightgray, thick](-2,1)--(-2,-1)--(-3.73,0)--cycle;

                    \draw[thick](2,1)--(-2,1);

                    \draw[thick](2,-1)--(-2,-1);
                    \draw[thick](0,1)--(0,-1);

                    \draw[thick](2,1)--(3,2.73)--(4.73,1.73)--(3.73,0);
                    \draw[thick](2,-1)--(3,-2.73)--(4.73,-1.73)--(3.73,0);

                    \draw[thick](-2,1)--(-3,2.73)--(-4.73,1.73)--(-3.73,0);
                    \draw[thick](-2,-1)--(-3,-2.73)--(-4.73,-1.73)--(-3.73,0);

                    \draw[very thick, red] (0, 1) -- (0, -1);
                    \draw[very thick, blue] (-2, 1) -- (2, 1);
                    \draw[thick] (-2, -1) -- (2, -1);

                    \draw (0, 2.73) node {\Large $\nodeR$};
                    \draw (0, -2.73) node {\Large $\nodeW$};
                    \draw (-6.0, 0) node {\Large $\nodeB$};
                    \draw (6.0, 0) node {\Large $\nodeB$};

                    \draw(0,1)node{\Large $\nodeW$};
                    \draw(0,-1)node{\Large $\nodeR$};

                    \draw(2,1)node{\Large $\nodeR$};
                    \draw(2,-1)node{\Large $\nodeW$};
                    \draw(3.73,0)node{\Large $\nodeB$};
                    \draw(3,2.73)node{\Large $\nodeB$};
                    \draw(4.73,1.73)node{\Large $\nodeR$};
                    \draw(3,-2.73)node{\Large $\nodeB$};
                    \draw(4.73,-1.73)node{\Large $\nodeW$};

                    \draw(-2,1)node{\Large $\nodeR$};
                    \draw(-2,-1)node{\Large $\nodeW$};
                    \draw(-3.73,0)node{\Large $\nodeB$};
                    \draw(-3,2.73)node{\Large $\nodeB$};
                    \draw(-4.73,1.73)node{\Large $\nodeR$};
                    \draw(-3,-2.73)node{\Large $\nodeB$};
                    \draw(-4.73,-1.73)node{\Large $\nodeW$};
                \end{tikzpicture}

        \caption{An action model $\ActSMP$ for $3$ agents, 
                generated from an input model consisting of two facets}
        \label{fig:3_smp_two_simplicial}
    \end{figure}
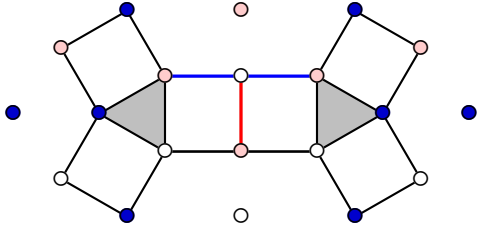

\begin{example} \label{eg:smpTwo}
    Let us consider an input simplicial model $\cplI$ for $3$~agents,  
    consisting of  two facets $X = \{ (\nodeW, 0), (\nodeR, 0), (\nodeB, 0) \}$ and 
    $Y = \{ (\nodeW, 0), (\nodeR, 0), (\nodeB, 1) \}$.
    Then we obtain the action model $\ActSMP$ as depicted in Figure~\ref{fig:3_smp_two_simplicial}. 
    We obtain this complex by first making copies of the complex of the inputless action model,
    for each facet $X$ and $Y$, and pasting them appropriately. 
    For example, the action represented by the poset $t=\RefClosure{\{\nodeB<\nodeW, \nodeB<\nodeR\}}$ 
    generates two copies $(\ActEquiv{t}{X}, t)$ and $(\ActEquiv{t}{Y}, t)$, but 
    these are identified (as depicted by the $1$-dimensional simplex marked by the red line in the figure)
    because $\ActEquiv{t}{X} = \ActEquiv{t}{Y} = \{X, Y\}$.
    Similarly, the topmost node $\nodeR$ and the bottom node~$\nodeW$ in the figure
    identify the corresponding $0$-dimensional simplexes in the copies generated from $X$ and $Y$. 

    Consider another action $s=\RefClosure{\{\nodeB<\nodeW, \nodeB<\nodeR\}}$.
    This generates two copies $(\ActEquiv{s}{X}, s)$ and $(\ActEquiv{s}{Y}, s)$, 
    which are depicted in the figure by the two $1$-dimensional simplexes marked by the blue lines.
    These simplexes are connected via a common vertex $\nodeW$. That is, 
    $(\ActEquiv{s}{X}, s) \sim_{\nodeW}^{\ActSMP} (\ActEquiv{s}{Y}, s)$ holds,
    since $X \sim_{\nodeR}^{\Inpu} Y$ and $\nodeR$ is the only agent $b$
    that satisfies $\NotSendFail{s}{b}{\nodeW}$.
    The two simplexes are not identified nevertheless, since 
    $(\ActEquiv{s}{X}, s) \not\sim_{\nodeR}^{\ActSMP} (\ActEquiv{s}{Y}, s)$ 
    follows from $X \not\sim_{\nodeR}^{\Inpu} Y$ and 
    $\NotSendFail{s}{\nodeR}{\nodeW}$.

\end{example}

\section{Logical Obstruction for Synchronous Message Passing Protocol}
\label{sec:obstructionSMP}

In this section, we argue the solvability of the consensus task by the
single-round synchronous message passing protocol, using partial product
update models.

In \cite{S1571:2001:HerlihyRajsbaumTuttle}, a precise topological analysis
has been given on the solvability of $k$-set agreement tasks by
synchronous message passing protocols of varying protocol rounds.
We demonstrate that the techniques developed in the previous sections can
be applied to reproduce a part of these results: The consensus task is
solvable by a single-round execution of the synchronous message passing
protocol if the number $n$ of agents in the system is $2$ or less;
otherwise, when $n>2$, it is unsolvable.

\subsection{Action model for the consensus task}
\label{subsec:consensustask}

The consensus task is specified by the following 
constraints on the input and output values of the agents.
\begin{itemize}
    \item \textbf{Agreement}:
          Every non-faulty agent must decide on the same output value;
    \item \textbf{Validity}:
          The agreed output value must be an input value to some of the agents in the system.
\end{itemize}

Without loss of generality, we may assume the set of possible inputs 
for the consensus task is $\{0,\ldots, n-1\}$, the set of agent ids. 

\begin{definition}[Partial product update model of the consensus task]\label{def:consensus}
    The consensus task is defined by a partial product update model $\ImProd{\cplI}{\actMf{T}}$,
    where  $\cplI$ is the input simplicial model and  
    $\actMf{T}=\anglpair{T, \relK{T}{}, \precond^T}$ is the action model consisting of:
\begin{itemize}
    \item The set of actions $T= \Ag = \{0,\ldots, n-1\}$;
    \item The family of indistinguishability relations $\{ \relK{T}{a} \}_{a\in\Ag}$
          defined by  $v \relK{T}{a} v'$ iff $v=v'$;
    \item The precondition defined by 
          $\precond^T (v) =\bigvee_{a\in \Ag} \Pinput{a}{v}$
          for each action $v\in T$.
\end{itemize}
\end{definition}

An action in $v\in T$ stands for a single output value unanimously decided by all live agents.
Each agent can distinguish any pair of distinct actions in $T$, since all agents know 
the agreed value.   
The precondition $\precond^T(v)$ guarantees the validity condition
of the consensus task, that is, the agreed value~$v$ must be the input
to some agent~$a$.

\subsection{Solvability of the consensus task with 2 agents}
\label{subsec:2consensus}

Let us show that the consensus task is solvable, when $n=2$. %
(The case $n=1$ is trivial.)

Remember that the input simplicial model $\cplI$ for $2$ agents consists 
of four facets $X_{00}, X_{01}, X_{10}, X_{11}$, 
where $X_{ij}= \{(0,i), (1,j)\}$ for each $i,j \in \{0,1\}$.  
The consensus task is then given by 
the partial product update model 
$\ImProd{\cplI}{\actMf{T}}=\anglpair{W, \sim,  L}$ consisting of:
\begin{itemize}
    \item $W= \{(\{X_{ij}\}, 0)\mid i\neq 1 \text{ or } j\neq 1\}\cup \{(\{X\}, 1)\mid i\neq 0 \text{ or } j\neq 0\}$;
    \item The indistinguishability relation defined by  
    $(\{X\}, v) \relK{}{a} (\{X'\}, v')$ iff           $X\relK{\cplI}{a} X'$ and $v=v'$, for each $a\in\Ag$;
    \item $L\bigl((\{X\}, v)\bigr)=\labSM^{\cplI} (X)$.
\end{itemize}
The underlying complex of $\ImProd{\cplI}{\actMf{T}}$ is illustrated 
in Figure~\ref{fig:consensusPPU_2}.
Note that the complex is pure, meaning that every agent $a\in\Ag$ is alive in every world of the model.

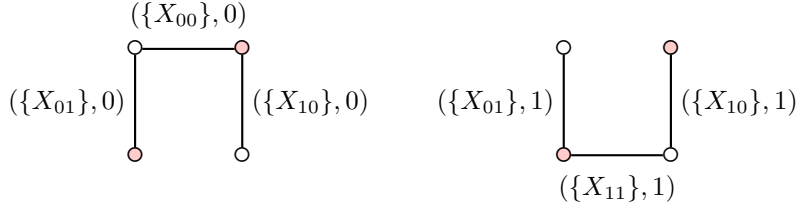
\begin{figure}[ht]
    \centering
    \begin{tikzpicture}[scale=1.41]
        \draw[thick](0, 0)--(0, 1)--(1, 1)--(1, 0);
        \draw[thick](4, 1)--(4, 0)--(5, 0)--(5, 1);

        \draw(0, 0)node{{\Large $\nodeR$}};
        \draw(0, 1)node{{\Large $\nodeW$}};
        \draw(1, 1)node{{\Large $\nodeR$}};
        \draw(1, 0)node{{\Large $\nodeW$}};

        \draw(0, 0.5)node [anchor=east]{$(\{X_{01}\}, 0)$};
        \draw(0.5, 1.1)node [anchor=south]{$(\{X_{00}\}, 0)$};
        \draw(1, 0.5)node[anchor=west]{$(\{X_{10}\}, 0)$};

        \draw(4, 1)node{{\Large $\nodeW$}};
        \draw(4, 0)node{{\Large $\nodeR$}};
        \draw(5, 0)node{{\Large $\nodeW$}};
        \draw(5, 1)node{{\Large $\nodeR$}};

        \draw(4.0, 0.5)node [anchor=east]{$(\{X_{01}\}, 1)$};
        \draw(4.5, -0. 1)node[anchor=north]{$(\{X_{11}\}, 1)$};
        \draw(5, 0.5)node[anchor=west]{$(\{X_{10}\},  1)$};
    \end{tikzpicture}
    \caption{The partial product update model $\ImProd{\cplI}{\actMf{T}}$ of the consensus task for $2$ agents}
    \label{fig:consensusPPU_2}
\end{figure}

    Next, let us consider the partial product update model 
    $\ImProd{\cplI}{\MP{\cplI}}$ for the synchronous message passing protocol, 
    where $\MP{\cplI}$ is the action model for $2$~agents. 
    Remember that the inputless action model for $2$~agents 
    is given by $\ActOneSMP=\anglpair{\MPpA{2},\MPprel{2}}$,
    where $\MPpA{2}=\{\clos{\emptyset},  \clos{\{0<1\}},$ $\clos{\{1<0\}}\}$. 
    For brevity, let us write $t_{\emptyset}$, $t_0$, and $t_1$ for the posets
    $\clos{\emptyset}$,  $\clos{\{0<1\}}$, and $\clos{\{1<0\}}$, respectively.
    Topologically, these posets correspond to the facets in the impure
    simplicial complex illustrated in Figure~\ref{fig:inputless_n2}.

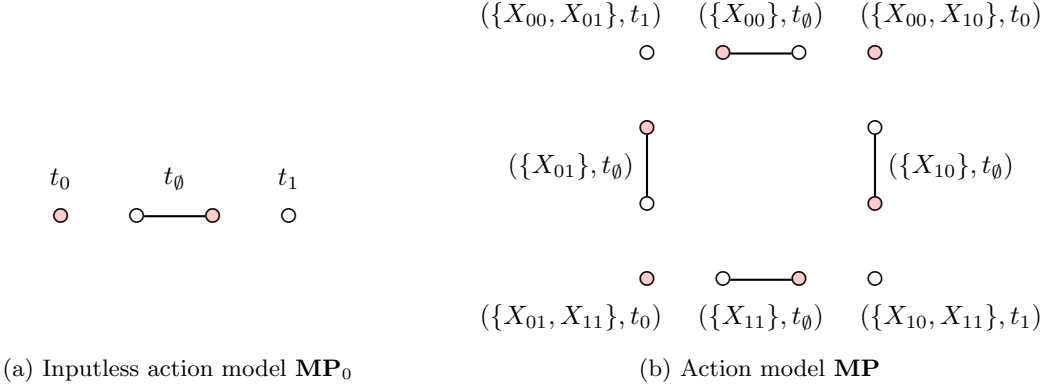
\begin{figure}[ht]
    \begin{minipage}[b]{.44\linewidth}
        \centering
        \begin{tikzpicture}
            \draw[thick](-0.5,0)--(0.5,0);

            \draw(-1.5,0)node{{\Large $\nodeR$}};
            \draw(-1.5,0.5)node{$t_0$};

            \draw(-0.5,0)node{{\Large $\nodeW$}};
            \draw(0.5,0)node{{\Large $\nodeR$}};

            \draw(1.5,0)node{{\Large $\nodeW$}};
            \draw(1.5,0.5)node{$t_1$};

            \draw(0,0.5)node{$t_{\emptyset}$};

        \end{tikzpicture}
        \vspace*{4em}
        \subcaption{Inputless action model $\ActOneSMP$}
        \label{fig:inputless_n2}
    \end{minipage}\hfil%
    \begin{minipage}[b]{.55\linewidth}
        \centering
        \begin{tikzpicture}
            \draw[thick](0, 1)--(0, 2);
            \draw[thick](1, 3)--(2, 3);
            \draw[thick](1, 0)--(2, 0);
            \draw[thick](3, 1)--(3, 2);

            \draw(0, 0)node{{\Large $\nodeR$}};
            \draw(0, 2)node{{\Large $\nodeR$}};
            \draw(1, 3)node{{\Large $\nodeR$}};
            \draw(2, 0)node{{\Large $\nodeR$}};
            \draw(3, 3)node{{\Large $\nodeR$}};
            \draw(3, 1)node{{\Large $\nodeR$}};

            \draw(0, 1)node{{\Large $\nodeW$}};
            \draw(0, 3)node{{\Large $\nodeW$}};
            \draw(1, 0)node{{\Large $\nodeW$}};
            \draw(3, 2)node{{\Large $\nodeW$}};
            \draw(3, 0)node{{\Large $\nodeW$}};
            \draw(2, 3)node{{\Large $\nodeW$}};

            \draw(1. 5, 3. 5)node{$(\{X_{00}\}, t_{\emptyset})$};
            \draw(1. 5, -0. 5)node{$(\{X_{11}\}, t_{\emptyset})$};
            \draw(-1, 3. 5)node{$(\{X_{00}, X_{01}\},  t_1)$};
            \draw(4, 3. 5)node{$(\{X_{00}, X_{10}\}, t_0)$};
            \draw(-1, 1. 5)node{$(\{X_{01}\}, t_{\emptyset})$};
            \draw(4, 1. 5)node{$(\{X_{10}\}, t_{\emptyset})$};
            \draw(-1, -0. 5)node{$(\{X_{01}, X_{11}\}, t_0)$};
            \draw(4,  -0. 5)node{$(\{X_{10}, X_{11}\},  t_1)$};
        \end{tikzpicture}
        \subcaption{Action model $\MP{\cplI}$}
        \label{fig:inputfull_n2}
    \end{minipage}
    \caption{Action model of the synchronous message passing protocol for $2$ agents}
    \label{fig:actModels_impure}
\end{figure}

Following Definition~\ref{def:SMP}, we obtain the action model 
$\MP{\cplI}=\anglpair{\MPA{\cplI}, \MPrel{\cplI}, \MPprecond{\cplI}}$ consisting of:
\begin{itemize}
    \item The set of actions 
    \[
        \MPA{\cplI}=
                \{(\{X_{ij}\}, t_{\emptyset})\mid i,j\in\{0,1\} \}
                \cup\{(\{X_{0j}, X_{1j}\},  t_0)\mid j\in \{0, 1\}\}\cup \{(\{X_{i0}, X_{i1}\},  t_1)\mid i\in \{0, 1\}\};\]
    \item The family of indistinguishability relations defined by  
    $(\ActEquiv{t}{X}, t)\MPrel{\cplI}_a (\ActEquiv{t'}{X'},t')$ iff 
    $(\ActEquiv{t}{X}, t) =(\ActEquiv{t'}{X'},t')$ and $t\sim^{\ActOneSMP}_a t'$;
    \item The labeling $\MPprecond{\cplI} \bigl((\ActEquiv{t}{X} , t)\bigr)=\bigvee \{\bigwedge \labSM^{\cplI}(X')\mid X' \in \ActEquiv{t}{X}\}$.
\end{itemize}
The corresponding topological structure of this action model is illustrated by the simplicial complex
in Figure~\ref{fig:inputfull_n2}.

    Combining $\cplI$ and $\MP{\cplI}$, we obtain a partial product update model of
    the synchronous message passing protocol $\ImProd{\cplI}{\MP{\cplI}}$ for $2$ agents,
    where its Kripke frame has a topological structure isomorphic to that of the action model $\MP{\cplI}$
    and the labeling  $L^{\ImProd{\cplI}{\MP{\cplI}}}$
    on the worlds of the Kripke frame is given by:
    \[
        L^{\ImProd{\cplI}{\MP{\cplI}}}\bigl((\ActEquiv{t}{X},  t)\bigr)= \bigcap \{\labSM^{\cplI}(X')\mid X' \in \ActEquiv{t}{X}\}.
    \]

    Observe that $\ImProd{\cplI}{\actMf{T}}$ corresponds to a complex comprising of
    two disconnected subcomponents, one for the output $0$ and the other for $1$
    and that $\ImProd{\cplI}{\MP{\cplI}}$ corresponds to
    a collection of discrete facets, none of which are connected.
    It is easy to devise a morphism $\delta:
    \ImProd{\cplI}{\MP{\cplI}}\to\cplI\{\actMf{T}\}$   that satisfies the
    condition~\eqref{eq:kripsolve} in Section~\ref{subsec:tasksolv} for
    task solvability.
    For instance, such a morphism~$\delta$ is given by:
    \begin{align*}
        \delta\bigl((\{X_{11}\},  t_{\emptyset})\bigr)  & {} = \bigsat{\{0,1\}}{(\{X_{11}\}, 1)},                           \\
        \delta\bigl((\{X_{ij}\}, t_{\emptyset})\bigr)   & {} =\bigsat{\{0,1\}}{(\{X_{ij}\}, 0)} \quad  (i\neq 1 \text{ or } j \neq 1), \\
        \delta\bigl((\ActEquiv{t_0}{X_{ij}},  t_0)\bigr) & {} =\bigsat{\{1\}}{(\{X_{ij}\},j)},                               \\
        \delta\bigl((\ActEquiv{t_1}{X_{ij}},  t_1)\bigr) & {} =\bigsat{\{0\}}{(\{X_{ij}\},  i)}.
    \end{align*}

\begin{theorem} \label{th:consSolvable2Ags}
    Suppose we have a system of $2$ agents with $\Ag=\{0,1\}$. Then
    the consensus task is solvable by
    the single-round synchronous message passing protocol.
\end{theorem}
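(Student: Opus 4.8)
The plan is to verify that the explicit assignment $\DeltaKrip=\delta$ displayed just above the theorem meets both requirements of Definition~\ref{def:kripkesolvability}: that it is a morphism $\ImProd{\cplI}{\MP{\cplI}}\to\ImProd{\cplI}{\actMf{T}}$ of partial epistemic models, and that it satisfies the inclusion~\eqref{eq:kripsolve}. Before doing either, I would record one structural simplification peculiar to $n=2$. The inputless frame $\ActOneSMP$ has only the three actions $t_{\emptyset},t_0,t_1$, and since $\Lo(t_0)=\{0\}$ and $\Lo(t_1)=\{1\}$, a direct reading of Definition~\ref{def:smp_one} shows that $t\sim^{\ActOneSMP}_a s$ already forces $t=s$. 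By Definition~\ref{def:SMP} this propagates to $\MPrel{\cplI}_a$, so in the source model $\ImProd{\cplI}{\MP{\cplI}}$ distinct worlds are never related: the frame is a disjoint union of isolated facets carrying only self-loops, exactly as anticipated in the discussion preceding the theorem. This removes every nontrivial cross-world obligation from the checks below.

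Next I would confirm that $\delta$ is well defined and a morphism. Well-definedness is an issue only for the $t_0$ and $t_1$ worlds: the class $\ActEquiv{t_0}{X_{ij}}$ depends on $j$ alone and $\ActEquiv{t_1}{X_{ij}}$ on $i$ alone, and the values $\bigsat{\{1\}}{(\{X_{ij}\},j)}$ and $\bigsat{\{0\}}{(\{X_{ij}\},i)}$ are insensitive to the suppressed coordinate, so $\delta$ is a genuine map on worlds. For the morphism axioms, preservation of $\sim$ is vacuous across distinct worlds (none are related), and on a self-loop $w\sim_a w$ with $a\in\Alives{w}$ it reduces to showing that any two members of $\delta(w)$ are $\sim_a$-related; this is immediate because each $\delta(w)$ is by construction a saturation in $\ImProd{\cplI}{\actMf{T}}$. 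The saturation axiom then holds by the very form of $\delta$, the chosen base world lying in its own saturation. Finally, preservation of atomic formulas compares $L(w)\cap\At_{\Alives{w}}$ with $L'(w')\cap\At_{\Alives{w}}$ for $w'\in\delta(w)$; both sides record precisely the input values of the agents alive in $w$ (agent~$1$ for the $t_0$ worlds, agent~$0$ for the $t_1$ worlds, both for the $t_{\emptyset}$ worlds), so they agree.

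The remaining task is the inclusion~\eqref{eq:kripsolve}, checked over the four families defining $\delta$. For the full-dimensional worlds over $t_{\emptyset}$ the pre-classes involved are singletons that coincide, and $\delta$ merely routes each input towards an agreed value ($1$ for $X_{11}$, $0$ otherwise) for which validity is clear, so the inclusion holds as an equality. The genuinely new case is the impure one: for a world over $t_0$ the only surviving agent, $1$, has received its own value $j$ but cannot tell which input produced it, and $\delta$ sends the world to the consensus decision $v=j$. Here~\eqref{eq:kripsolve} asserts exactly the validity clause of the consensus task, namely that $j$ is an admissible output for \emph{every} input that agent~$1$ cannot distinguish; this holds because $j$ is agent~$1$'s own input and hence witnesses $\precond^T(j)$ at each such input. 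The $t_1$ worlds are symmetric, with the roles of the two agents exchanged.

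The step I expect to be the crux is this impure case of~\eqref{eq:kripsolve}: it is the only point at which the detectability of failure is used, and it is where one must argue that the value the lone survivor is compelled to output stays valid across the whole set of inputs indistinguishable to that survivor. Everything else is bookkeeping made routine by the disconnectedness of the source frame. As an independent confirmation one could instead invoke the equivalence between this logical notion of solvability and the topological one (Appendix~\ref{sec:solvablEq}) together with the classical solvability of $2$-agent consensus; but since an explicit $\delta$ is already in hand, verifying it directly is the more self-contained route.
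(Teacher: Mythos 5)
Your proposal is correct and follows the paper's own route: the paper proves this theorem precisely by exhibiting the displayed morphism $\delta$ and observing that $\ImProd{\cplI}{\MP{\cplI}}$ is a disjoint union of isolated facets while $\ImProd{\cplI}{\actMf{T}}$ splits into two pure components, leaving the morphism axioms and condition~\eqref{eq:kripsolve} as routine checks that you carry out explicitly. The only point worth flagging is that in the impure $t_0$/$t_1$ case you (rightly, and as the paper's own Appendix~\ref{sec:solvablEq} does) read~\eqref{eq:kripsolve} as requiring each input of $\preEqClass{p}{X_p}$ to lie in the pre-class of \emph{some} world of $\deckrip(w)$, rather than a single target world absorbing the whole class --- under the latter, literal reading the displayed $\delta$ would not satisfy the condition, since the task-side pre-classes are singletons.
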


\subsection{Unsolvability of the consensus task with 3 agents}
\label{subsec:unsolvecons3}

The consensus task is not solvable by the synchronous message passing protocol,
when the number $n$ of agents is $3$ or greater.
This is because, in contrast to the case where $n$ is $2$ or less, the
resulting partial product update model of the protocol is more tightly
connected in a topological sense.

To demonstrate this topological intuition in partial epistemic models,
we first examine the case $n=3$, where the 
partial product update model $\ImProd{\cplI}{\MP{\cplI}}$ of the 
synchronous message passing protocol is an 
impure $2$-dimensional complex. 
    Let us define a guarded positive formula $\Phi_3\in \ModLang_{\Modalfont{K}, \Palive}^+$ by
    \[
        \Phi_3 \equiv \ModK{2} \ModK{1}\ModK{2}\varphi_0 \vee \ModK{0}\ModK{2}\ModK{0} \varphi_1 \vee \ModK{1}\ModK{0}\ModK{1} \varphi_2,
    \]
    where 
    $\varphi_i \equiv \bigvee_{a\in\Ag} \Palive(a)\Rightarrow \Pinput{a}{i}$ for $i=0,1,2$.

    We claim that $\Phi_3$ is a logical obstruction.
    That is, $\Phi_3$ is valid in $\cplI\{\actMf{T}\}$ but not valid in $\ImProd{\cplI}{\ActSMP}$,
    where $\actMf{T}$ is the action model for the consensus task
    and $\ActSMP$ is the action model for the synchronous message passing protocol, for $3$~agents.

    The input simplicial complex $\cplI$ for $3$ agents has 
    an underlying pure $2$-dimensional complex and we write
    $X_{ijk}$ to denote a facet $\{(0,i), (1,j), (2,k)\}$ of the complex, where $i,j,k \in \{0, 1, 2\}$.
    The labeling on a facet is defined by $\labSM^{\cplI}(X_{ijk})=\{\Pinput{0}{i}, \Pinput{1}{j}, \Pinput{2}{k}\}$.
    The action model of the consensus task $\actMf{T}=\anglpair{T, \sim^T, \precond^T}$
    is defined similarly  as in Section~\ref{subsec:2consensus}, except that
    the set of actions $T= \{0, 1, 2\}$ consists of the three possible outputs $0$, $1$, and $2$.

    The partial product update model $\ImProd{\cplI}{\actMf{T}}$ of the consensus task
    has an underlying complex, pure of dimension~$2$.
    This means that 
    $\Palive(a)$ is true for every agent $a\in\Ag$ in every world of $\ImProd{\cplI}{\actMf{T}}$
    and therefore $\varphi_i$ 
    is equivalent to $\precond^T (i)=\bigvee_{a\in\Ag}\Pinput{a}{i}$ 
    for each action $i\in \{0,1,2\}$. 
    (Remember that $T=\Ag= \{0,\ldots, n-1\}$.)
        By the definition, $\ImProd{\cplI}{\actMf{T}}, (\preEqClass{v}{X}, v)\models \varphi_v$ holds.
    Since $(\preEqClass{v}{X}, v) \relK{T}{a} (\preEqClass{v'}{X'}, v')$ 
    implies $v=v'$ for every $a\in\Ag$,  either of the disjuncts 
    $\ModK{2} \ModK{1}\ModK{2}\varphi_0$, $\ModK{0}\ModK{2}\ModK{0} \varphi_1$, or  
    $\ModK{1}\ModK{0}\ModK{1} \varphi_2$ holds. 
    Therefore $\Phi_3$ is
    a valid formula in $\ImProd{\cplI}{\actMf{T}}$.

    We then discuss the remaining half, i.e., $\Phi_3$ is invalid in
    the partial product update model of the synchronous message passing protocol.
    Remember that the inputless action model $\ActOneSMP$ for $3$ agents 
    has an impure $2$-dimensional complex given in Figure~\ref{fig:3_protKripSmpOne}
    as the underlying complex, and that each facet in $\ActOneSMP$ is represented
    by a poset. 

    As we have seen in Section~\ref{subsec:constActModel}, 
    we can derive the action model $\MP{\cplI}$ by making 
    copies of the inputless model $\ActOneSMP$ for each input facet of the input model
    and then pasting them appropriately.
    To show the unsolvability of the consensus task, it suffices to
    consider pasting the copies for two facets $X_{012}$ and $X_{112}$.
    This  results in a subcomplex of $\MP{\cplI}$ given in
    Figure~\ref{fig:actionmodelMPI},   which merges the result of actions
    on the input   facet $X_{012}$ (the left half of the picture) and that
    on $X_{112}$     (i.e., the right half).
    In the figure, some facets are designated by the corresponding actions:
    $t_0= (\ActEquiv{\clos{\emptyset}}{X_{012}}, \clos{\emptyset})$,
    $t_1= (\ActEquiv{u_1}{X_{012}}, u_1)$, %
    $t_2= (\ActEquiv{u_1}{X_{112}}, u_1)$, %
    $t_3= (\ActEquiv{\clos{\emptyset}}{X_{112}},  \clos{\emptyset})$, and
    $t_4 = (\ActEquiv{u_2}{X_{012}}, u_2)$,
    where 
    $u_1=\clos{\{ 0< 1 \}}=\clos{\{ \nodeW < \nodeR \}}$ and
    $u_2=\clos{\{ 0<1, 0<2 \}}=\clos{\{ \nodeW< \nodeR, \nodeW < \nodeB \}}$.
    As already discussed in Section~\ref{subsec:constActModel}, 
    the two copies of the complex of $\ActOneSMP$ are merged via 
    the facet designated by $t_4$, since
    $\ActEquiv{u_2}{X_{012}}=\ActEquiv{u_2}{X_{112}}$.

    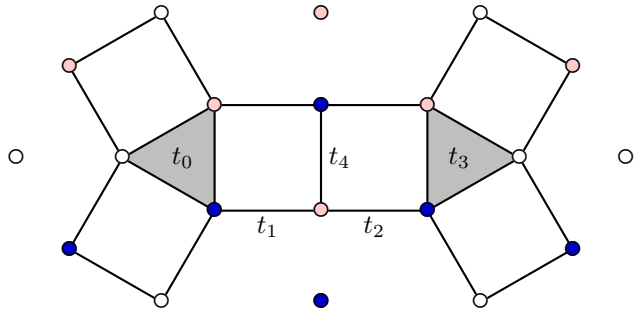
\begin{figure}[ht]
        \centering
        \begin{tikzpicture}[scale=0.7]
            \draw[fill=lightgray,  thick](2, 1)--(2, -1)--(3. 73, 0)--cycle;
            \draw(2. 6, 0)node{$t_3$};

            \draw[fill=lightgray,  thick](-2, 1)--(-2, -1)--(-3. 73, 0)--cycle;
            \draw(-2. 6, 0)node{$t_0$};

            \draw[thick](2, 1)--(-2, 1);
            \draw(-1, -1. 3)node{$t_1$};
            \draw(1, -1.3)node{$t_2$};

            \draw[thick](2, -1)--(-2, -1);
            \draw[thick](0, 1)--(0, -1);
            \draw(0.35, 0)node{$t_4$};

            \draw[thick](2, 1)--(3, 2. 73)--(4. 73, 1. 73)--(3. 73, 0);
            \draw[thick](2, -1)--(3, -2. 73)--(4. 73, -1. 73)--(3. 73, 0);

            \draw[thick](-2, 1)--(-3, 2. 73)--(-4. 73, 1. 73)--(-3. 73, 0);
            \draw[thick](-2, -1)--(-3, -2. 73)--(-4. 73, -1. 73)--(-3. 73, 0);

            \draw(0, 1)node{{\Large $\nodeB$}};
            \draw(0, -1)node{{\Large $\nodeR$}};

            \draw(0, 2.73)node{{\Large $\nodeR$}};
            \draw(0, -2.73)node{{\Large $\nodeB$}};
            \draw(-5.73, 0)node{{\Large $\nodeW$}};
            \draw(5.73, 0)node{{\Large $\nodeW$}};

            \draw(2, 1)node{{\Large $\nodeR$}};
            \draw(2, -1)node{{\Large $\nodeB$}};
            \draw(3. 73, 0)node{{\Large $\nodeW$}};
            \draw(3, 2. 73)node{{\Large $\nodeW$}};
            \draw(4. 73, 1. 73)node{{\Large $\nodeR$}};
            \draw(3, -2. 73)node{{\Large $\nodeW$}};
            \draw(4. 73, -1. 73)node{{\Large $\nodeB$}};

            \draw(-2, 1)node{{\Large $\nodeR$}};
            \draw(-2, -1)node{{\Large $\nodeB$}};
            \draw(-3. 73, 0)node{{\Large $\nodeW$}};
            \draw(-3, 2. 73)node{{\Large $\nodeW$}};
            \draw(-4. 73, 1. 73)node{{\Large $\nodeR$}};
            \draw(-3, -2. 73)node{{\Large $\nodeW$}};
            \draw(-4. 73, -1. 73)node{{\Large $\nodeB$}};
        \end{tikzpicture}
        \caption{A subcomplex of the action model $\MP{\cplI}$ that merges the actions on $X_{012}$ and $X_{112}$}
        \label{fig:actionmodelMPI}
    \end{figure}

    We obtain the partial product update model $\ImProd{\cplI}{\MP{\cplI}}$ 
    for the synchronous message passing protocol by combining $\cplI$ with $\MP{\cplI}$.
    We note that $\ImProd{\cplI}{\MP{\cplI}}$
    has an isomorphic partial epistemic frame with that of the action model $\MP{\cplI}$.
    (See Appendix~\ref{prop:protocolEq} for the formal discussion of this fact.)
    Therefore, without loss of generality,
    we may regard $\ImProd{\cplI}{\MP{\cplI}}$
    as a partial epistemic model $\anglpair{\MPA{\cplI}, \MPrel{\cplI},L^{\MP{\cplI}}}$,
    where $\anglpair{\MPA{\cplI}, \MPrel{\cplI}}$ is the Kripke frame of $\MP{\cplI}$
    and $L^{\MP{\cplI}}$ is a labeling on the worlds of $\MPA{\cplI}$:
    \[
        L^{\MP{\cplI}}\bigl((\ActEquiv{t}{X},  t)\bigr)= \bigcap \{\labSM^{\cplI}(X')\mid X' \in \ActEquiv{t}{X}\}.
    \]

    Now we show that $\Phi_3$ is invalid in 
    the partial product update model $\ImProd{\cplI}{\MP{\cplI}}$.
    We claim that $\Phi_3$ is false in the world $t_0$ in Figure~\ref{fig:actionmodelMPI}.

    From Figure~\ref{fig:actionmodelMPI}, it is easy to see that
    the facet $t_0$ is connected with $t_3$ via
    $t_0 \MPrel{\cplI}_2 t_1 \MPrel{\cplI}_1 t_2 \MPrel{\cplI}_2 t_3$.
    We show $\ImProd{\cplI}{\MP{\cplI}}, t_0 \not\models \ModK{2}\ModK{1}\ModK{2} \varphi_0$, 
    by traversing the above path in reverse order.
    In $t_3$, all agents are alive, that is,
    $\ImProd{\cplI}{\MP{\cplI}},  t_3 \models \Palive(a) $
    for every $a\in \Ag$.
    Since $L^{\ActSMP}(t_3)=  \{\Pinput{0}{1}, \Pinput{1}{1}, \Pinput{2}{2}\}$,
    $\varphi_0 \equiv \bigvee_{a\in\Ag} \Palive(a)\Rightarrow \Pinput{a}{0}$ is false,
    i.e., $\ImProd{\cplI}{\MP{\cplI}},  t_3 \not\models \varphi_0$.
    Since $t_3 \MPrel{\cplI}_2 t_2$,
    this implies that $\ImProd{\cplI}{\MP{\cplI}},  t_2 \not\models \ModK{2}\varphi_0$.
    Repeating this by traversing $t_2 \MPrel{\cplI}_1 t_1$ and then
    $t_1 \MPrel{\cplI}_2 t_2$,
    we obtain
    $\ImProd{\cplI}{\MP{\cplI}},  t_2 \not\models \ModK{2}\ModK{1}\ModK{2}\varphi_0$.
    This shows that $\ModK{2}\ModK{1}\ModK{2}\varphi_0$ is an invalid formula in
    $\ImProd{\cplI}{\MP{\cplI}}$.

    Due to the symmetry of the partial product update model $\ImProd{\cplI}{\MP{\cplI}}$,
    we can similarly show the invalidity of the remaining disjuncts in $\Phi_3$.
    Therefore $\Phi_3$ is invalid in $\ImProd{\cplI}{\MP{\cplI}}$.

\begin{theorem}\label{thm:unsolvecons3}
    For a system of $3$ agents with $\Ag=\{0,1,2\}$, the consensus task is not solvable by
    the single-round synchronous message passing protocol.
\end{theorem}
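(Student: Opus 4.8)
The plan is to apply the logical obstruction theorem (Theorem~\ref{thm:LOTheorem}) directly with the formula $\Phi_3$, so that the proof reduces to three checks: that $\Phi_3$ belongs to the guarded positive fragment $\ModLangKPlus$, that it is valid in the task model $\ImProd{\cplI}{\actMf{T}}$, and that it is refuted at some world of the protocol model $\ImProd{\cplI}{\MP{\cplI}}$. For the first check, each atom-level block $\Paliveop{a}\Rightarrow\Pinput{a}{i}$ has the guarded shape $\Paliveop{B}\Rightarrow\psi$ with $B=\{a\}$ and $\psi=\Pinput{a}{i}$ a pure propositional formula over $\At_{\{a\}}$; disjoining over $a\in\Ag$ stays inside the grammar, and prefixing the modalities $\ModK{a}$ preserves membership, so indeed $\Phi_3\in\ModLangKPlus$. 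This is precisely the hypothesis that licenses the use of the knowledge gain property (Proposition~\ref{prop:knowledgeGainLk}) inside Theorem~\ref{thm:LOTheorem}.

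For validity in $\ImProd{\cplI}{\actMf{T}}$ I would reuse the fact that its underlying complex is pure of dimension~$2$, so $\Paliveop{a}$ is true at every world for every $a$ and each $\varphi_i$ collapses to $\precond^T(i)=\bigvee_{a\in\Ag}\Pinput{a}{i}$. At any world $(\preEqClass{v}{X},v)$ the validity clause encoded in $\precond^T$ forces $\varphi_v$ to hold there; moreover $(\preEqClass{v}{X},v)\relK{T}{a}(\preEqClass{v'}{X'},v')$ implies $v=v'$, so every world reachable from $(\preEqClass{v}{X},v)$ by any string of $\ModK{a}$ modalities still decides $v$ and hence still satisfies $\varphi_v$. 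Consequently the disjunct of $\Phi_3$ carrying $\varphi_v$ is true at $(\preEqClass{v}{X},v)$, and since this holds for every world we obtain $\ImProd{\cplI}{\actMf{T}}\models\Phi_3$.

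For the refutation in $\ImProd{\cplI}{\MP{\cplI}}$ I would work in the subcomplex of Figure~\ref{fig:actionmodelMPI} and show that all three disjuncts of $\Phi_3$ fail at the single world $t_0=(\ActEquiv{\clos{\emptyset}}{X_{012}},\clos{\emptyset})$. The first disjunct is refuted along the knowledge path $t_0\MPrel{\cplI}_2 t_1\MPrel{\cplI}_1 t_2\MPrel{\cplI}_2 t_3$: at $t_3$ every agent is alive with label $\{\Pinput{0}{1},\Pinput{1}{1},\Pinput{2}{2}\}$, so $\varphi_0$ fails there, and pulling this failure back through the modalities $\ModK{2},\ModK{1},\ModK{2}$ yields $\ImProd{\cplI}{\MP{\cplI}},t_0\not\models\ModK{2}\ModK{1}\ModK{2}\varphi_0$. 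The crux of the argument is this path: it links the input facet $X_{012}$ to $X_{112}$, whose labels disagree about which input values are present, through worlds that the relevant agents cannot distinguish — precisely the extra connectivity of the single-round protocol complex for three agents that is absent for two. To dispatch the other two disjuncts at the same $t_0$ I would exhibit the cyclic automorphism of $\ImProd{\cplI}{\MP{\cplI}}$ given by the simultaneous permutation $0\mapsto1\mapsto2\mapsto0$ of agent identities and input values; it fixes $t_0$ and carries each disjunct of $\Phi_3$ to the next, so refuting one disjunct at $t_0$ refutes all three. Verifying that this map really is a model automorphism fixing $t_0$ is the step demanding the most care and is the main obstacle; granting it we have $\ImProd{\cplI}{\MP{\cplI}}\not\models\Phi_3$, and unsolvability follows at once from Theorem~\ref{thm:LOTheorem}.
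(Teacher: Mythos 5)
Your proposal is correct and follows essentially the same route as the paper: the same obstruction $\Phi_3$, the same purity argument for validity in $\ImProd{\cplI}{\actMf{T}}$, and the same path $t_0\MPrel{\cplI}_2 t_1\MPrel{\cplI}_1 t_2\MPrel{\cplI}_2 t_3$ to refute $\ModK{2}\ModK{1}\ModK{2}\varphi_0$ at $t_0$. Your only departure is to make the paper's ``due to the symmetry'' step precise via the cyclic automorphism fixing $t_0$ and permuting the disjuncts, which is a welcome sharpening since invalidity of $\Phi_3$ requires all three disjuncts to fail at a \emph{common} world.
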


\subsection{Unsolvability of the consensus task, general case}
\label{subsec:unsolveconsgeneral}

We show the unsolvability for the general case, where the number of agents is $3$ or greater.
The proof is almost the same as that for $3$~agents, but we have to argue
the connectivity in a complex of higher dimension. 
Below we present a logical obstruction for the general case and
discuss the connectivity symbolically on the formal description of actions.

\begin{theorem}\label{thm:unsolveconsn}
    For a system with $\Ag=\{0,\cdots, n-1\}$ ($n\geq3$), 
    the consensus task is not solvable by the single-round synchronous message passing protocol.
\end{theorem}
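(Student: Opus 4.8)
The plan is to transplant the three-agent argument of Theorem~\ref{thm:unsolvecons3} to arbitrary $n\geq 3$, using as logical obstruction the cyclically symmetric formula
\[
    \Phi_n \equiv \bigvee_{v\in\Ag} \ModK{v+2}\ModK{v+1}\ModK{v+2}\,\varphi_v,
    \qquad
    \varphi_v \equiv \bigvee_{a\in\Ag}\bigl(\Palive(a)\Rightarrow\Pinput{a}{v}\bigr),
\]
with all agent indices read modulo $n$; for $n=3$ this is precisely $\Phi_3$. By the logical obstruction theorem (Theorem~\ref{thm:LOTheorem}) it then suffices to show that $\Phi_n$ is valid in $\ImProd{\cplI}{\actMf{T}}$ but fails at some world of $\ImProd{\cplI}{\MP{\cplI}}$.

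For validity I would repeat the three-agent reasoning verbatim. The task model $\ImProd{\cplI}{\actMf{T}}$ is pure of dimension $n-1$, so every $\Palive(a)$ holds everywhere and $\varphi_v$ collapses to $\precond^T(v)=\bigvee_{a\in\Ag}\Pinput{a}{v}$. At a world deciding $v$ the validity clause forces some agent to carry input $v$, hence $\varphi_v$ is true there; and since $v\relK{T}{a}v'$ entails $v=v'$, every world reached by any modal prefix still decides $v$, so the $v$-th disjunct $\ModK{v+2}\ModK{v+1}\ModK{v+2}\varphi_v$ holds. As every world decides some value, $\Phi_n$ is valid.

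For invalidity I would fix the all-distinct input facet $X^{*}$, in which agent $a$ carries input $a$, and the world $t_0=(\ActEquiv{\clos{\emptyset}}{X^{*}},\clos{\emptyset})$, and show that every disjunct is false at $t_0$ along the cyclic copy of the path $t_0\MPrel{\cplI}_2 t_1\MPrel{\cplI}_1 t_2\MPrel{\cplI}_2 t_3$. Fixing $v$, I put $u^v=\clos{\{v<v+1\}}$ and let $X^{(v)}$ agree with $X^{*}$ except that agent $v$'s input is changed from $v$ to $v+1$; the relevant path is
\[
    t_0
    \MPrel{\cplI}_{v+2} (\ActEquiv{u^v}{X^{*}},u^v)
    \MPrel{\cplI}_{v+1} (\ActEquiv{u^v}{X^{(v)}},u^v)
    \MPrel{\cplI}_{v+2} (\ActEquiv{\clos{\emptyset}}{X^{(v)}},\clos{\emptyset}).
\]
Each step is read off Definition~\ref{def:SMP}: under $u^v$ the only undelivered message is $v\to v+1$, so agent $v+2$ receives everything under both $\clos{\emptyset}$ and $u^v$ and cannot separate them while the input stays fixed, whereas agent $v+1$ is exactly the agent blind to $v$ and so tolerates the change at coordinate $v$. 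The hypothesis $n\geq 3$ ensures $v,v+1,v+2$ are distinct and $v+2\in\Alives{u^v}$; as $v+2$ then receives every message it pins down all inputs, giving $\ActEquiv{u^v}{X^{*}}=\{X^{*}\}$ and $\ActEquiv{u^v}{X^{(v)}}=\{X^{(v)}\}$, so that the displayed chain is an actual path of distinct worlds. At its endpoint all agents are alive and the inputs are those of $X^{(v)}$, in which value $v$ no longer occurs; hence $\varphi_v$ is false. Propagating this back through the prefix $\ModK{v+2}\ModK{v+1}\ModK{v+2}$ exactly as in the three-agent computation gives $\ImProd{\cplI}{\MP{\cplI}},t_0\not\models\ModK{v+2}\ModK{v+1}\ModK{v+2}\varphi_v$; as this holds for every $v$, we conclude $t_0\not\models\Phi_n$.

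The main obstacle is the symbolic bookkeeping of the two indistinguishability steps for general $n$: one must check, from the defining conditions of $\sim^{\ActSMP}_a$, that no agent other than $v+1$ is confused by the single-coordinate change $X^{*}\leftrightarrow X^{(v)}$, and that the intervening equivalence classes are singletons. Because erasing value $v$ only requires altering agent $v$'s own input, the path stays length three for every $n$; the sole role of the higher dimension is that these verifications must be carried out on the posets directly rather than by reading off a figure.
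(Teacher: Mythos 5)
Your proposal is correct and follows essentially the same route as the paper: the same obstruction $\Phi_n$, validity in $\ImProd{\cplI}{\actMf{T}}$ by reduction to $\precond^T$, and invalidity via the length-three path through $\clos{\{v<v{+}1\}}$ to an input facet in which the value $v$ no longer occurs. The only (harmless) difference is that you carry out the path explicitly for every disjunct at the common world $t_0$, where the paper treats only $i=0$ and appeals to the cyclic symmetry of the model.
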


\begin{proof}
    Let us define a guarded positive formula $\Phi_n$ ($n\geq 3$) by:
    \[
        \Phi_n \equiv \bigvee_{i\in \{0, \cdots,  n-1\}} \ModK{(i+2)\bmod n}\ModK{(i+1)\bmod n}\ModK{(i+2)\bmod n} \varphi_i,
    \]
    where $\varphi_i \equiv \bigvee_{a\in\Ag} \Palive(a)\Rightarrow \Pinput{a}{i}$
    for each $i$ ($0\leq i\leq n-1$).

    We claim that $\Phi_n$ is a logical obstruction.
    Let $\cplI$ be the input model, $\actMf{T}$ be the action model of the consensus task,
    $\MP{\cplI}$ be the action model of the synchronous message passing protocol, 
    each defined for $n$ agents.
    We also let $\ImProd{\cplI}{\actMf{T}}$
    and $\ImProd{\cplI}{\MP{\cplI}}$ be the partial product update models
    for the task and the protocol, respectively.

    By Theorem~\ref{thm:LOTheorem}, it suffices to
    show that $\Phi_n$ is valid in $\ImProd{\cplI}{\actMf{T}}$
    but is invalid in $\ImProd{\cplI}{\MP{\cplI}}$.
    The validity of $\Phi_n$ is $\ImProd{\cplI}{\actMf{T}}$ is similarly shown
    as we have demonstrated in Section~\ref{subsec:unsolvecons3}.

    Let us show that $\Phi_n$ is invalid in 
    $\ImProd{\cplI}{\MP{\cplI}}=\anglpair{\MPA{\cplI}, \MPrel{\cplI},L^{\ActSMP}}$. 
    Similarly as in Section~\ref{subsec:unsolvecons3}, by symmetry,
    it suffices to show that $\ModK{2} \ModK{1} \ModK{2} \varphi_0$
    is false in some world of $\ImProd{\cplI}{\MP{\cplI}}$.
    We show that $\ImProd{\cplI}{\MP{\cplI}}, (\ActEquiv{\clos{\emptyset}}{X},  \clos{\emptyset})
        \not\models \ModK{2} \ModK{1} \ModK{2} \varphi_0$,
    where $X$ is a facet of $\cplI$ that is uniquely identified by the labeling
    $\labSM^{\cplI} (X)=\{\Pinput{i}{i}\mid 0\leq i \leq n-1\}$.

    Let us define $X'$ be the facet of $\cplI$ that is uniquely determined by
    the labeling $\labSM (X') =\{\Pinput{0}{1}\} \cup \{\Pinput{i}{i}\mid 1\leq i \leq n-1\}$.
    We show that
    $(\ActEquiv{\clos{\emptyset}}{X},  \clos{\emptyset})$
    and
    $(\ActEquiv{\clos{\emptyset}}{X'},  \clos{\emptyset})$
    are connected along the following path:
    \[
        (\ActEquiv{\clos{\emptyset}}{X}, \clos{\emptyset})\MPrel{\cplI}_2
        (\ActEquiv{\clos{\{0< 1\}}}{X},  \clos{\{0<1\}})\MPrel{\cplI}_1
        (\ActEquiv{\clos{\{0< 1\}}}{X'}, \clos{\{0<1\}}) \MPrel{\cplI}_2
        (\ActEquiv{\clos{\emptyset}}{X'},  \clos{\emptyset}).
    \]

    To show $(\ActEquiv{\clos{\emptyset}}{X}, \clos{\emptyset})\MPrel{\cplI}_2
        (\ActEquiv{\clos{\{0< 1\}}}{X},  \clos{\{0<1\}})$, 
    it suffices to show $\clos{\emptyset} \relK{\ActOneSMP}{2} \clos{\{0<1\}}$, 
    which immediately follows from $2\not\in \lo(\clos{\emptyset})\cup \lo(\clos{\{0<1\}})$.
    The proof of 
    $(\ActEquiv{\clos{\{0< 1\}}}{X'}, \clos{\{0<1\}}) \MPrel{\cplI}_2
    (\ActEquiv{\clos{\emptyset}}{X'},  \clos{\emptyset})$ is similar.
    For $(\ActEquiv{\clos{\{0< 1\}}}{X},  \clos{\{0<1\}})\MPrel{\cplI}_1 (\ActEquiv{\clos{\{0< 1\}}}{X'}, \clos{\{0<1\}})$, 
    since $X\not\sim^{\cplI}_a X'$ if and only if $a\neq 0$,  
    it suffices to show that $\clos{\{0<1\}}\relK{\ActOneSMP}{1} \clos{\{0<1\}}$, 
    which  follows from $1 \not\in \Lo(\clos{\{0<1\}})$.

    All agents are alive in $(\ActEquiv{\clos{\emptyset}}{X'},  \clos{\emptyset})$, that is,
    $\MP{\cplI},  (\ActEquiv{\clos{\emptyset}}{X'},  \clos{\emptyset}) \models \Palive(a)$ 
    holds   for every $a\in \Ag$.
    Furthermore, 
    $L^{\ImProd{\cplI}{\MP{\cplI}}} \bigl((\ActEquiv{\clos{\emptyset}}{X'},  \clos{\emptyset})\bigr)
    = \{ \Pinput{0}{1} \} \cup \{ \Pinput{i}{i} \mid 1\leq i \leq n-1  \}$.
    These imply $\ImProd{\cplI}{\MP{\cplI}}, (\ActEquiv{\clos{\emptyset}}{X'},  \clos{\emptyset}) \not\models
    \varphi_0$. %
    Traversing the above path in reverse order, we sequentially obtain 
    $\ImProd{\cplI}{\MP{\cplI}},(\ActEquiv{\clos{\{0< 1\}}}{X'}, \clos{\{0<1\}})$ $\not\models \ModK{2}\varphi_0$, then
    $\ImProd{\cplI}{\MP{\cplI}}, (\ActEquiv{\clos{\{0< 1\}}}{X}, \clos{\{0<1\}}) \not\models \ModK{1}\ModK{2}\varphi_0$, 
    and finally
    $\ImProd{\cplI}{\MP{\cplI}}, (\ActEquiv{\clos{\emptyset}}{X},  \clos{\emptyset}) \not\models \ModK{2}\ModK{1}\ModK{2}\varphi_0$. 
    This proves that $\ModK{2}\ModK{1}\ModK{2}\varphi_0$ is invalid in  $\ImProd{\cplI}{\MP{\cplI}}$.

    Consequently, by Theorem~\ref{thm:LOTheorem},
    we conclude that the consensus task is not solvable by the synchronous message passing protocol,
    if the number of agents is $3$ or greater.
\end{proof}

\section{Conclusion and Future Work}
\label{sec:conclusion}

We have proposed the notion of partial product update models that is suitable 
for modeling distributed tasks and protocols in partial epistemic models,
in which agents may die. 
Using partial product update models, 
we defined task solvability in partial epistemic models,
thereby providing the logical method that allows a logical 
obstruction to prove the unsolvability of a distributed task.  
This logical method defined for partial product update models
extends the original product update proposed for epistemic models:
Given a pair of an input model $\cplI$ and an action model $\actMf{A}$, 
a partial product update model $\ImProd{\cplI}{\actMf{A}}$ refines 
the original product update model up to indistinguishability by the set of live agents.
The unsolvability of a task is then proved 
by a logical obstruction.
We have presented a concrete formula of epistemic logic to show that the consensus task is
unsolvable by the single-round synchronous message passing protocol. 
We have shown that the formula is indeed a logical obstruction, 
where the partial product update model for the synchronous message passing protocol 
is constructed from an action model whose actions are posets of rank at most $1$.

We have demonstrated an unsolvability result, but only for the consensus task and the single-round execution of the protocol. 
In contrast, the topological method can prove more general unsolvability results 
for $k$-set agreement tasks and multiple-round execution of the protocol \cite{S1571:2001:HerlihyRajsbaumTuttle}. 
It is an interesting topic to pursue how to obtain these generalized results 
using the logical method. 

The notion of partial product update itself could be of interest from the perspective of 
dynamic epistemic logic \cite{DitmarschHoekKooi:DELbook08}. 
The original product update was developed 
in the context of dynamic updates of Kripke models: 
In terms of logic with epistemic action modalities, 
$\cplI, X \models [\actMf{A}] \varphi$ iff
$\cplI, X \models \precond(t)$ implies
$\Prod{\cplI}{\actMf{A}}, (X,t) \models \varphi$. 
It would be interesting 
to figure out an appropriate logical interpretation of partial product update.

\subsection*{Acknowledgment}
The third author is supported by JSPS KAKENHI Grant Number 20K11678.

{
\bibliographystyle{plainurl}
\bibliography{distrib}
}

\appendix 
\section*{Appendix}
\section{Equivalence of The Two Task Solvabilities}
\label{sec:solvablEq}

\subsection{The topological definition of task solvability}
\label{subsec:toptasksolve}

In the topological method for distributed computing, tasks and protocols are defined
by means of carrier maps, where a carrier map is a color-preserving function that associates each 
\emph{simplex} of an input complex with a subcomplex of an output complex \cite{Book:2013:HerlihyKozlovRajsbaum}.
However, the partial epistemic models in this paper 
are built around facets, not simplexes of arbitrary dimension, 
which implies that partial product update models cannot cover 
all aspects of carrier maps. 
(Partial product update models assume that all agents participate in the computation anyway, 
though some of them may crash during the execution.)

For this reason, in this paper we define the topological specification of
a task or a protocol by a suboptimal function, which we call a
\keywd{facet map}, that associates each facet of the input complex with a
set of facets of the output complex.

In what follows, we assume the set of vertexes $V$ of a complex $\cplC=\anglpair{V,S,\coloring}$
is a finite subset of $\Ag\times\Value$ and the coloring map 
is defined by $\coloring\bigl((a,v)\bigr)=a$ for each $(a,v)\in V$. 
With this assumption, we can easily translate
a complex $\anglpair{V,S,\coloring}$ into the corresponding
input simplicial model $\anglpair{V,S,\coloring,\labSM}$ given in Definition~\ref{def:inputsimplicialmodel}, 
by defining $\labSM(X)=\{\Pinput{a}{v} \mid (a,v)\in X\}$,
and similarly for the reverse translation.
In this way, 
we will confuse a complex with an input simplicial model.

\begin{definition}
    \label{def:facetmap}
    Given complexes 
    $\cplC=\anglpair{V,S,\coloring}$ and $\cplD=\anglpair{V',S',\coloring'}$, 
    a function $\Theta: \Facet(\cplC) \to 2^{\Facet(\cplD)}$ is  
    called a \keywd{facet map} if it satisfies the following conditions.
    \begin{itemize}
        \item $\Theta$ is color-preserving, that is, 
        $\displaystyle\bigcup_{Y\in \Theta(X)}\coloring'(Y)\subseteq \coloring(X)$ for every facet $X\in \Facet(\cplC)$;
        \item $\Theta$ is surjective, i.e., 
         $\Facet(\cplD) = \bigcup_{X\in\Facet(\cplC)} \Psi(X)$.
    \end{itemize}

    In order to define a task and a protocol, 
    let $\cplI=\anglpair{V^{\cplI}, S^{\cplI}, \coloring^{\cplI}}$ be the common input complex of them.
    \label{def:simplicialtask}
    A task is defined by a triple $\anglpair{\cplI,\cplO,\Delta}$, which we call a \keywd{simplicial task},
    where 
        $\cplO=\anglpair{V^{\cplO}, S^{\cplO}, \coloring^{\cplO}}$ is an output complex and 
        $\Delta:\func{\Facet(\cplI)}{\PowerSet{\Facet(\cplO)}}$ is a facet map.
    \label{def:simplicialprotocol}
    A protocol is defined by a triple $\anglpair{\cplI,\cplP,\Psi}$, which we call a \keywd{simplicial protocol},
    where 
    $\cplP=\anglpair{V^{\cplP}, S^{\cplP}, \coloring^{\cplP}}$ is an output complex and 
    $\Psi:\func{\Facet(\cplI)}{\PowerSet{\Facet(\cplP)}}$ is a facet map satisfying the following condition:

        \begin{align} \label{eq:carriercont} & 
            \begin{minipage}{.85\textwidth}
                $\coloring^{\cplP}(Y\cap Y')\subseteq \coloring^{\cplI} (X\cap X')$,
                for any $X,X'\in \Facet(\cplI)$ and $Y, Y' \in \Facet(\cplP)$  such that  \\
                $Y\in \Psi(X)$ and $Y'\in \Psi(X')$. 
            \end{minipage}
        \end{align}
\end{definition}

The condition~\eqref{eq:carriercont} is a natural requirement for the protocol:
If an agent~$a$ observes the same output of the protocol in two facets $Y$ and $Y'$, 
the agent~$a$ must also have agreed on the input in two facets $X$ and $X'$, 
where  $Y$ (resp., $Y'$) represents a possible global state of the system 
that is reachable from the global state of the initial input represented by $X$ (resp., $X'$). 

\begin{definition}\label{def:topsolvability}
    We say 
    a protocol $\anglpair{\cplI, \cplP, \Psi}$ solves a task $\anglpair{\cplI, \cplO, \Delta}$,  
    if there exists a \keywd{descision map}, i.e., 
    a chromatic simplicial map $\DeltaTop: \cplP \to \cplO$ that satisfies
    \begin{align} \label{eq:topsolve} &
    \begin{minipage}{.85\textwidth}
            $\DeltaTop (\Psi(X)) \subseteq \Delta(X)$ for every $X \in \Facet(\cplI)$,  
    \end{minipage}          
    \end{align}
    where  
    $\DeltaTop (\Psi(X)) = \{ \DeltaTop(Y) \mid Y \in \Psi(X) \}$, and
    the inclusion $\mathcal{S} \subseteq \mathcal{S'}$
    means that for any $X\in \mathcal{S}$
    there exists $X'\in\mathcal{S'}$ such that $X\subseteq X'$. 
\end{definition}

\subsection{Equivalence of two task solvabilities}\label{subsec:solvablEq}

We will show that the above topological task solvability using simplicial complexes 
can be translated to that using partial product update models 
(Definition~\ref{subsec:tasksolv}).
For this, we introduce a translation function $\actionkappa$ as below.

\begin{definition}\label{def:actionkappa}
    Let $\cplI=\anglpair{V^{\cplI}, S^{\cplI}, \coloring^{\cplI}, \labSM^{\cplI}}$ be an input simplicial model.   
    We define a translation function $\actionkappa$ that assigns an action model for 
    a simplicial task or protocol $\anglpair{\cplI, \cplD, \Theta}$
    by 
    \[ \actionkappa(\anglpair{\cplI, \cplD, \Theta})= \actMf{A} \]
    where $\actMf{A}=\anglpair{A, \sim^A, \precond^A}$ is an action model consisting of:
    \begin{itemize}
        \item The set of actions $A = \Facet(\cplD)$;
        \item The family of indistinguishability relations defined by   
        $X\sim_a^A Y$ iff $a\in \coloring^{A} (X\cap Y)$;
        \item The precondition defined by 
        $\precond^A (Y)=\bigvee\{\bigwedge \labSM^{\cplI} (X)\mid X\in \Facet(\cplI),  Y\in \Theta(X)\}$.
    \end{itemize}
\end{definition}

The translation function $\actionkappa$ is an augmentation of 
the functor $\kappa$, which has been introduced 
in \cite{STACS22:GoubaultLedentRajsbaum} to show the association of 
simplicial complexes with their corresponding partial epistemic frames,
for the purpose of establishing the categorical equivalence of these two structures.
As such, $\actionkappa$ preserves the partial epistemic frame $\anglpair{\Facet(\cplD),\sim^{\cplD}}$
induced from $\cplD$. Furthermore, the precondition 
$\precond^A$ gives the condition for $Y\in \Facet(\cplD)$ to be
a possible output of $\Theta$ for an input $X \in \Facet(\cplI)$, that is,  
$\cplI,  X\models \precond^A (Y)$ if and only if $Y\in \Theta(X)$. 
In this sense, $\actionkappa$ associates an action model that is equivalent to a given 
simplicial task or protocol.

Particularly for an action model $\actMf{P}$ that is translated from a simplicial protocol $\anglpair{\cplI,\cplP,\Phi}$, 
the partial product update model $\ImProd{\cplI}{\actMf{P}}$ has a partial epistemic frame 
that is isomorphic to that of $\actMf{P}$.
\begin{proposition}\label{prop:protocolEq}
    Let $\cplI=\anglpair{V^{\cplI}, S^{\cplI}, \coloring^{\cplI}, \labSM^{\cplI}}$ be an input simplicial model,
    $\anglpair{\cplI, \cplP, \Psi}$ be 
    a simplicial protocol,  and $\actMf{P}=\actionkappa(\anglpair{\cplI, \smplMf{P}, \Psi})$
    be the translated action model. Then 
    the action model $\actMf{P}$ and the partial product update model $\ImProd{\cplI}{\actMf{P}}$ 
    have isomorphic partial epistemic frames.
\end{proposition}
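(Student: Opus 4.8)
The plan is to exhibit the second projection $\proj_2 \colon \ImProd{\cplI}{\actMf{P}} \to \actMf{P}$, sending a world $(\PreEquiv{t}{X}, t)$ to its action $t \in \Facet(\cplP)$, and to verify that it is an isomorphism of partial epistemic frames. Two facts recorded for the translation $\actionkappa$ will carry the whole argument: the characterization of the precondition, namely $\cplI, X \models \precond^{\actMf{P}}(t)$ if and only if $t \in \Psi(X)$, and the coherence condition~\eqref{eq:carriercont} satisfied by the facet map of a simplicial protocol. First I would note that $\proj_2$ is well defined, and then establish in turn that it is surjective, injective, and relation-preserving in both directions.

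For surjectivity, fix an action $t \in \Facet(\cplP)$. Since $\Psi$ is a facet map it is surjective, so $t \in \Psi(X)$ for some $X \in \Facet(\cplI)$; the precondition characterization then gives $\cplI, X \models \precond^{\actMf{P}}(t)$, while color-preservation of $\Psi$ yields $\coloring^{\cplP}(t) \subseteq \coloring^{\cplI}(X)$, i.e.\ $\AliveSet{t} \subseteq \AliveSet{X}$. Hence $(\PreEquiv{t}{X}, t)$ is a legitimate world with $\proj_2\bigl((\PreEquiv{t}{X}, t)\bigr) = t$. For injectivity, suppose $(\PreEquiv{t}{X}, t)$ and $(\PreEquiv{t}{X'}, t)$ are both worlds; then $t \in \Psi(X)$ and $t \in \Psi(X')$, and instantiating~\eqref{eq:carriercont} with $Y = Y' = t$ gives $\AliveSet{t} = \coloring^{\cplP}(t) \subseteq \coloring^{\cplI}(X \cap X')$, which is precisely $X \sim^{\cplI}_{\AliveSet{t}} X'$. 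Since $\sim^{\cplI}_{\AliveSet{t}}$ is an equivalence relation on $\Facet(\cplI)$ (all agents being alive in the pure input model) and both $X, X'$ satisfy $\precond^{\actMf{P}}(t)$, we conclude $\PreEquiv{t}{X} = \PreEquiv{t}{X'}$, so the two worlds coincide.

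It remains to show that $\proj_2$ preserves and reflects the indistinguishability relations. Preservation is immediate from the definition of partial product update, since $(\PreEquiv{t}{X}, t) \sim_a (\PreEquiv{s}{Y}, s)$ requires in particular $t \sim^{\actMf{P}}_a s$. For reflection, assume $t \sim^{\actMf{P}}_a s$, i.e.\ $a \in \coloring^{\cplP}(t \cap s)$, and let $X, Y$ be the representatives of the two worlds, so $t \in \Psi(X)$ and $s \in \Psi(Y)$; condition~\eqref{eq:carriercont} then gives $a \in \coloring^{\cplP}(t \cap s) \subseteq \coloring^{\cplI}(X \cap Y)$, that is $X \sim^{\cplI}_a Y$, which together with $t \sim^{\actMf{P}}_a s$ yields $(\PreEquiv{t}{X}, t) \sim_a (\PreEquiv{s}{Y}, s)$. (Independence of the chosen representatives is guaranteed by the well-definedness of $\sim_a$, proven when $\ImProd{\cplI}{\actMf{P}}$ was shown to be a partial epistemic model.) The one step that genuinely uses the protocol hypothesis — and the part I expect to be the crux — is injectivity: it is exactly condition~\eqref{eq:carriercont} that forces all input facets producing a common action to collapse into a single world, making the frame of $\ImProd{\cplI}{\actMf{P}}$ no larger than that of $\actMf{P}$; the surjectivity and the two relational inclusions are then routine consequences of the defining properties of $\actionkappa$.
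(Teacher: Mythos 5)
Your proof is correct and follows essentially the same route as the paper's: the paper exhibits the mutually inverse maps $f(Y)=(\preEqClass{Y}{X},Y)$ and $g\bigl((\preEqClass{Y}{X},Y)\bigr)=Y$ (your $\proj_2$ is $g$, and your surjectivity-plus-injectivity argument is the paper's well-definedness of $f$), with condition~\eqref{eq:carriercont} doing the work at exactly the same two points — collapsing all input facets admitting a common action into one world, and reflecting the indistinguishability relation.
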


\begin{proof}
Let $\actMf{P}=\anglpair{P, \sim^P, \precond^P}$
and $\ImProd{\cplI}{\ActSMP} = \anglpair{W^{\ImProd{\cplI}{\ActSMP}}, \relK{\ImProd{\cplI}{\ActSMP}}{}, L^{\ImProd{\cplI}{\ActSMP}}}$. 
We define a pair of maps 
$f: P \to W^{\ImProd{\cplI}{\ActSMP}}$ 
and 
$g: W^{\ImProd{\cplI}{\ActSMP}} \to P$ 
by 
\[
    f(Y)=(\preEqClass{Y}{X},  Y) ~~ \text{and} ~~ g\bigl((\preEqClass{Y}{X}, Y)\bigr)= Y,  
    \]
where $X,Y \in \Facet(\cplI)$. 

We show that $f$ is well-defined, that is, 
the set $\preEqClass{Y}{X}$ is not affected by the choice of $X$. 
Suppose $(\preEqClass{Y}{X},  Y), (\preEqClass{Y}{X'},  Y) \in W^{\ImProd{\cplI}{\ActSMP}}$. 
By the definition of $\preEqClass{Y}{-}$, we have $\cplI,X\models \precond^P(Y)$ and
$\cplI,X' \models \precond^P(Y)$, which implies $Y\in \Psi(X)$ and $Y\in \Psi(X')$. 
Then, by the condition~\eqref{eq:carriercont} of the simplicial protocol, 
$X\relK{\cplI}{a}X'$ for all $a\in \AliveSet{Y}$ and
therefore $\preEqClass{Y}{X}=\preEqClass{Y}{X'}$. 

It is easy to see that $f$ and $g$ are inverses of each other, namely,  
$f\circ g=\ident{W^{\ImProd{\cplI}{\ActSMP}}}$ and $g\circ f =\ident{P}$.
Furthermore they preserve the indistinguishability relation. 
For $f$, suppose $Y\relK{P}{a} Y'$, where 
$f(Y)=(\preEqClass{Y}{X},Y)$ and $f(Y')=(\preEqClass{Y'}{X'},Y')$. 
By a similar discussion above, it follows that $Y\in \Psi(X)$ and $Y'\in \Psi(X')$
from the definition of $\preEqClass{Y}{-}$. 
This implies that $X \relK{\cplI}{a} X'$ and therefore 
$f(Y)\relK{\ImProd{\cplI}{\actMf{P}}}{a} f'(Y')$.
For the converse, 
suppose $\preEqClass{Y}{X}\relK{\ImProd{\cplI}{\actMf{P}}}{a} \preEqClass{Y'}{X'}$. 
Then, $Y \relK{\cplI}{a} Y'$ and therefore $g(Y) \relK{\cplI}{a} g(Y')$. 

The isomorphisms $f$ and $g$ can be regarded as morphisms over partial epistemic frames, 
defining 
$\hat{f}(Y)=\sat{\AliveSet{Y}}{f(Y)}$ and 
$\hat{g}\bigl((\preEqClass{Y}{X}, Y)\bigr)=\bigsat{\AliveSet{Y}}{g\bigl((\preEqClass{Y}{X}, Y)\bigr)}$. 
Hence $\anglpair{P,\sim^P}$ and $\ImProd{\cplI}{\actMf{P}}$ are isomorphic partial epistemic frames.
\end{proof}

We show the task solvability defined using simplicial complexes
implies that defined using partial product update models. 
That is, the existence of a morphism $\dectop$ %
in Definition~\ref{def:kripkesolvability} 
implies the existence of a morphism $\deckrip$ %
in Definition~\ref{def:topsolvability}, and vice versa.

\begin{lemma}\label{lem:kriptosimpldecision}
    Suppose a simplicial task $\anglpair{\cplI, \cplP, \Psi}$ 
    and a simplicial protocol $\anglpair{\cplI, \cplO, \Delta}$ are given, where
    $\cplI=\anglpair{V^{\cplI}, S^{\cplI}, \coloring^{\cplI}, \labSM^{\cplI}}$ is an input simplicial model.
    Let $\actMf{P}=\actionkappa(\anglpair{\cplI, \cplP, \Psi})$ and 
    $\actMf{T}=\actionkappa(\anglpair{\cplI, \cplO, \Delta})$ be the action models for the protocol and the task, respectively. 
    If there exists a morphism over partial epistemic models $\deckrip:\func{\ImProd{\cplI}{\actMf{P}}}{\ImProd{\cplI}{\actMf{T}}}$ 
    that satisfies the condition~\eqref{eq:kripsolve} in Definition~\ref{def:kripkesolvability}, 
    there exists a simplicial map $\dectop:\func{\cplP}{\cplO}$ that satisfies the condition~\eqref{eq:topsolve}
    in Definition~\ref{def:topsolvability}.
\end{lemma}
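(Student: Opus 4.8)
The plan is to read off the simplicial map $\dectop$ directly from the morphism $\deckrip$, using the identification of the protocol's partial product update model with $\cplP$ furnished by Proposition~\ref{prop:protocolEq}, and then to verify~\eqref{eq:topsolve} by transporting the inclusion guaranteed by~\eqref{eq:kripsolve}. First I would use Proposition~\ref{prop:protocolEq} to regard each facet $Y\in\Facet(\cplP)$ as the world $(\preEqClass{Y}{X},Y)$ of $\ImProd{\cplI}{\actMf{P}}$, where $X$ is any input facet with $Y\in\Psi(X)$, equivalently $\cplI,X\models\precond^{\actMf{P}}(Y)$ by the characterization of $\actionkappa$ recorded after Definition~\ref{def:actionkappa}; that proposition guarantees this world is independent of the chosen $X$. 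Applying~\eqref{eq:kripsolve} to this world, I fix once and for all a target $\theta(Y)=(\preEqClass{t}{X_t},t)\in\deckrip\bigl((\preEqClass{Y}{X},Y)\bigr)$ with $\preEqClass{Y}{X}\subseteq\preEqClass{t}{X_t}$, and I write $t(Y)\in\Facet(\cplO)$ for its action component.

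Next I would define the vertex map: for a vertex $v\in V^{\cplP}$ of colour $a$, choose any facet $Y\ni v$ and let $\dectop(v)$ be the unique colour-$a$ vertex of $t(Y)$. The main obstacle is well-definedness, i.e.\ independence of the choice of $Y$, and this is where the morphism axioms do the work. If $v\in Y\cap Y'$ then $a\in\coloring^{\cplP}(Y\cap Y')$, so $Y\relK{\cplP}{a}Y'$ and hence $(\preEqClass{Y}{X},Y)\sim_a(\preEqClass{Y'}{X'},Y')$; the preservation-of-$\sim$ clause of $\deckrip$ then forces $\theta(Y)\sim_a\theta(Y')$, and unfolding the definition of $\sim_a$ in a partial product update yields $t(Y)\sim^{\actMf{T}}_a t(Y')$, i.e.\ $a\in\coloring^{\cplO}(t(Y)\cap t(Y'))$, so $t(Y)$ and $t(Y')$ share their colour-$a$ vertex. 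The same argument with $Y'=Y$ (using $Y\relK{\cplP}{a}Y$ whenever $a\in\coloring^{\cplP}(Y)$) shows $a\in\coloring^{\cplO}(t(Y))$, so the colour-$a$ vertex in fact exists. This also makes $\dectop$ colour-preserving; and since $\dectop(Y)\subseteq t(Y)\in S^{\cplO}$ for every facet $Y$, simplexes are sent to simplexes, so $\dectop:\cplP\to\cplO$ is a chromatic simplicial map.

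It remains to verify~\eqref{eq:topsolve}, namely $\dectop(\Psi(X))\subseteq\Delta(X)$ for each $X\in\Facet(\cplI)$. Fix such an $X$ and any $Y\in\Psi(X)$. Because $\Psi$ is colour-preserving we have $\coloring^{\cplP}(Y)\subseteq\coloring^{\cplI}(X)$, hence $\AliveSet{Y}\subseteq\AliveSet{X}$ and $X\sim^{\cplI}_{\AliveSet{Y}}X$; together with $\cplI,X\models\precond^{\actMf{P}}(Y)$ this gives $X\in\preEqClass{Y}{X}$. By the choice of $\theta(Y)$ we have $\preEqClass{Y}{X}\subseteq\preEqClass{t}{X_t}$, so $X\in\preEqClass{t}{X_t}$, which means $\cplI,X\models\precond^{\actMf{T}}(t(Y))$, i.e.\ $t(Y)\in\Delta(X)$ by the $\actionkappa$-characterization of preconditions applied to the task. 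Since $\dectop(Y)\subseteq t(Y)$ was established above, $t(Y)\in\Delta(X)$ witnesses $\dectop(Y)$ in the sense of the inclusion of Definition~\ref{def:topsolvability}, so $\dectop(\Psi(X))\subseteq\Delta(X)$.

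I expect the delicate points to be purely bookkeeping: keeping the identification of Proposition~\ref{prop:protocolEq} consistent so that the given $X$ may be taken as the representative inside $\preEqClass{Y}{X}$, and making sure the chosen targets $\theta(Y)$ are exactly the elements to which the $\sim$-preservation clause is applied in the well-definedness argument. Notably, no properness hypothesis on $\ImProd{\cplI}{\actMf{T}}$ is required beyond the direct unfolding of $\sim_a$, and condition~\eqref{eq:carriercont} enters only indirectly, through the invocation of Proposition~\ref{prop:protocolEq}.
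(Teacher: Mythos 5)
Your proposal is correct and follows essentially the same route as the paper's proof: the vertex map is read off colour-by-colour from $\deckrip$, well-definedness is obtained from the preservation-of-$\sim$ clause of the morphism (the paper argues this directly on two witnessing facets rather than routing it through Proposition~\ref{prop:protocolEq} and a fixed choice $\theta$, but the content is the same), and condition~\eqref{eq:topsolve} is verified exactly as you do, via $X\in\preEqClass{Y}{X}\subseteq\preEqClass{t}{X_t}$ together with the characterization $\cplI,X\models\precond^{\actMf{T}}(Z)\Leftrightarrow Z\in\Delta(X)$.
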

\begin{proof}
Let $\actMf{P}=\anglpair{P, \sim^P, \precond^P}$ and $\actMf{T}=\anglpair{T, \sim^T, \precond^T}$
and also let $\cplP = \anglpair{V^{\cplP}, S^{\cplP}, \coloring^{\cplP}}$
and $\cplO = \anglpair{V^{\cplO}, S^{\cplO}, \coloring^{\cplO}}$.

Using $\deckrip$, 
we define a map $\dectop:V^{\cplP} \to V^{\cplO}$ by
\[
    \dectop(v) = u, 
    \]
where $v$ and $u$ is any pair of vertexes such that
$\coloring^{\cplP}(v) = \coloring^{\cplO}(u)$
and furthermore 
$v\in X$, $u\in Y$, and 
$(\preEqClass{Y}{Z'}, Y) \in \deckrip\bigl((\preEqClass{X}{Z}, X)\bigr)$ holds 
for some $X\in\Facet(\actMf{P})$, $Y\in\Facet(\actMf{O})$, and $Z, Z'\in \Facet(\cplI)$. 

We show $\coloring^{\cplP}(v)$ is defined for every $v\in V^{\cplP}$. 
Suppose $X$ is a facet satisfying $v\in X$. 
By the surjectivity of $\Psi$, there exists $Z\in\Facet(\cplI)$ such that 
$X\in \Psi(Z)$. This implies $(\preEqClass{X}{Z}, X)$ is defined.
Furthermore we have $\AliveSet{X}\subseteq\AliveSet{Y}$,   
since $\deckrip$ preserves the indistinguishability, $\AliveSet{X}\subseteq\AliveSet{Y}$. 
Let us define $\coloring^{\cplP}(v) =u$, 
where  $u\in Y$ be the vertex such that $\coloring^{\cplO}(u)=\coloring^{\cplP}(v)$. 
We show this uniquely defines $u$.
Suppose $u_1, u_2 \in V^{\cplO}$ are vertexes given by the definition, that is,
$\coloring^{\cplP}(v) = \coloring^{\cplO}(u_1)= \coloring^{\cplO}(u_2)$ and
furthermore there exists $X_1, X_2\in\Facet(\cplP)$, 
$Y_1, Y_2\in\Facet(\cplO)$, and
$Z_1, Z_2, Z_1', Z_2' \in\Facet(\cplI)$ such that 
$v\in X_i$, $u_i\in Y_i$, and 
$(\preEqClass{Y_i}{Z_i'}, Y_i) \in \deckrip\bigl((\preEqClass{X_i}{Z_i}, X_i)\bigr)$
for each $i=1,2$.  
By the definition of~$\deckrip$, we 
have $\AliveSet{X_i}\subseteq\AliveSet{Y_i}$ ($i=1,2$) and hence 
$\AliveSet{X_1\cap X_2}\subseteq \AliveSet{Y_1\cap Y_2}$. 
Since $v\in X_1 \cap X_2$,  
$\coloring^{\cplP}(v) = \coloring^{\cplO}(u_1)= \coloring^{\cplO}(u_2) \in \AliveSet{Y_1\cap Y_2}$.
This implies $u_1, u_2\in Y_1 \cap Y_2$ and hence $u_1=u_2$. 

To show that $\dectop$ defined above is a simplicial map, suppose $X = \{v_0,\ldots,v_m\}\in \Facet(\cplP)$. 
By the surjectivity of $\Psi$, 
there exists $Z\in \Facet(\cplI)$ such that $X\in\Psi(Z)$. 
This implies $(\preEqClass{X}{Z}, X) \in \Facet(\ImProd{\cplI}{\actMf{P}})$. 
By the condition~\eqref{eq:kripsolve} for $\deckrip$, 
$(\preEqClass{Y}{Z'}, Y)\in \deckrip\bigl((\preEqClass{X}{Z}, X)\bigr)$
for some $Y\in \Facet(\cplO)$ and $Z'\in \Facet(\cplI)$. 
By the definition of~$\dectop$, for each $v_i$, we have
$\dectop(v_i)=u_i$ where $u_i$ is the unique vertex satisfying $u_i\in Y$ 
and $\coloring^{\cplO}(u_i)= \coloring^{\cplP}(v_i)$. 
Therefore $\dectop(X)\subseteq Y\in \Facet(\cplO)$.

Let us show that $\dectop$ also satisfies \eqref{eq:topsolve}. 
Suppose $X\in\Facet(\cplI)$ and also $Y\in\Psi(X)$.
The latter implies $\cplI,X\models \precond^{\actMf{P}}(Y)$
and therefore 
$(\preEqClass{Y}{X}, Y)$
is defined. 
Then by~\eqref{eq:kripsolve}, there exists    
$(\preEqClass{Y'}{X'}, Y')\in \deckrip\bigl((\preEqClass{Y}{X}, Y)\bigr)$
such that $\preEqClass{Y}{X}\subseteq \preEqClass{Y'}{X'}$. 
Since $X\in  \preEqClass{Y}{X}\subseteq \preEqClass{Y'}{X'}$, 
$\preEqClass{Y'}{X'}=\preEqClass{Y'}{X'}$ 
and henceforth $(\preEqClass{Y'}{X'}, Y')\in \Facet(\ImProd{\cplI}{\actMf{T}})$.
This implies that $\cplI, X\models\precond^{\actMf{T}}(Y')$ and hence 
$Y'\in \Delta(X)$. 
By the definition of~$\dectop$, $\dectop(X)\subseteq Y'$. 
Therefore $\dectop(\Psi(X))\subseteq \Delta(X)$. 
\end{proof}

The converse direction is proven as follows. 

\begin{lemma}\label{lem:smpltokripdecision}
    Let $\actMf{P}=\actionkappa(\anglpair{\cplI, \cplP, \Psi})$ and  
    $\actMf{T}=\actionkappa(\anglpair{\cplI, \cplO, \Delta})$ be 
    the translated action model for a simplicial protocol $\anglpair{\cplI, \cplP, \Psi}$ 
    and a simplicial task $\anglpair{\cplI, \cplO, \Delta}$, respectively. 
    If there exists a simplicial map $\dectop:\func{\cplP}{\cplO}$ that satisfies \eqref{eq:topsolve}, 
    then there exists a morphism over partial epistemic models $\deckrip:\func{\ImProd{\cplI}{\actMf{P}}}{\ImProd{\cplI}{\actMf{T}}}$ that satisfies \eqref{eq:kripsolve}.
\end{lemma}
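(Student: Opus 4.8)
The plan is to reverse the construction of Lemma~\ref{lem:kriptosimpldecision}: from the simplicial decision map $\dectop\colon\func{\cplP}{\cplO}$ I would build the mediating morphism $\deckrip\colon\func{\ImProd{\cplI}{\actMf{P}}}{\ImProd{\cplI}{\actMf{T}}}$ and then verify the three morphism conditions together with the solvability inclusion~\eqref{eq:kripsolve}.

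First I would define $\deckrip$ on a world $(\preEqClass{p}{X},p)\in\ImProd{\cplI}{\actMf{P}}$, where $p\in\Facet(\cplP)$ and (equivalently to $\cplI,X\models\precond^{\actMf{P}}(p)$) $p\in\Psi(X)$. Since $\dectop$ is color-preserving, $\dectop(p)$ is a simplex of $\cplO$ with $\coloring^{\cplO}(\dectop(p))=\coloring^{\cplP}(p)=\AliveSet{p}$, and applying~\eqref{eq:topsolve} to $X$ produces a facet $t\in\Delta(X)$ with $\dectop(p)\subseteq t$, so that $\cplI,X\models\precond^{\actMf{T}}(t)$. I would then set
\[
  \deckrip\bigl((\preEqClass{p}{X},p)\bigr)=\bigsat{\AliveSet{p}}{(\preEqClass{t}{X},t)},
\]
the saturation of the task world $(\preEqClass{t}{X},t)$ with respect to the source alive set $\AliveSet{p}$; this mirrors the explicit two-agent morphism of Section~\ref{subsec:2consensus}. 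The first thing to check is well-definedness, i.e.\ independence of the representative $X\in\preEqClass{p}{X}$ and of the facet $t\supseteq\dectop(p)$, for which I would use the carrier condition~\eqref{eq:carriercont} to control how $X$ ranges over $\preEqClass{p}{X}$ and the maximality of facets together with color-preservation of $\Delta$ and $\dectop$ to pin $t$ down on the agents of $\AliveSet{p}$.

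Next I would discharge the morphism conditions. Saturation is immediate from the shape of the definition. For preservation of $\sim$, I would take $(\preEqClass{p}{X},p)\sim_a(\preEqClass{q}{Y},q)$, unfold it via Definition~\ref{def:partialProductUpdate} into $X\relK{\cplI}{a}Y$ and $a\in\coloring^{\cplP}(p\cap q)$, and observe that a shared color-$a$ vertex of $p$ and $q$ is sent by $\dectop$ to a common color-$a$ vertex of $\dectop(p)\subseteq t$ and $\dectop(q)\subseteq s$, where $s$ is the facet chosen at $(\preEqClass{q}{Y},q)$; hence $a\in\coloring^{\cplO}(t\cap s)$, and together with $X\relK{\cplI}{a}Y$ this propagates the indistinguishability to every pair drawn from the two saturations. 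Preservation of atomic formulas is a direct computation with the product-update labels $L\bigl((\preEqClass{t}{X},t)\bigr)=\bigcap_{Z\in\preEqClass{t}{X}}\labSM^{\cplI}(Z)$: every $Z$ in the class agrees with $X$ on the inputs of the agents in $\AliveSet{t}\supseteq\AliveSet{p}$, so both sides restrict to $\labSM^{\cplI}(X)\cap\At_{\AliveSet{p}}$.

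The step I expect to be the genuine obstacle is the solvability inclusion~\eqref{eq:kripsolve}. Here I must exhibit, among the worlds of $\bigsat{\AliveSet{p}}{(\preEqClass{t}{X},t)}$, one whose input-class is correctly related to $\preEqClass{p}{X}=\{Z\mid X\relK{\cplI}{\AliveSet{p}}Z,\ p\in\Psi(Z)\}$. For each $Z\in\preEqClass{p}{X}$ we have $p\in\Psi(Z)$, so~\eqref{eq:topsolve} applied to $Z$ yields a facet of $\Delta(Z)$ containing $\dectop(p)$, which color-preservation forces to agree with $t$ on $\AliveSet{p}$; the delicate point is reconciling the two alive sets occurring in the constraints $\relK{\cplI}{\AliveSet{p}}$ and $\relK{\cplI}{\AliveSet{t}}$, since $\AliveSet{t}$ may strictly contain $\AliveSet{p}$. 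It is exactly this mismatch that makes it necessary to pass to the saturation rather than to use the base world alone, and to select the representative whose input-class lines up with $\preEqClass{p}{X}$. Throughout I would invoke Proposition~\ref{prop:protocolEq} to identify $\ImProd{\cplI}{\actMf{P}}$ with $\actMf{P}$, keeping all the bookkeeping at the level of facets of $\cplP$, $\cplO$, and $\cplI$.
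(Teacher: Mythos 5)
Your construction of $\deckrip$ is exactly the paper's: saturate, with respect to $\AliveSet{p}$, the task world $(\preEqClass{t}{X},t)$ for a facet $t\in\Delta(X)$ containing $\dectop(p)$, check well-definedness via the fact that two such facets share all vertices of $\dectop(p)$ and hence are $\AliveSet{p}$-indistinguishable, verify the three morphism conditions the same way, and establish~\eqref{eq:kripsolve} by applying~\eqref{eq:topsolve} to each $X'\in\preEqClass{p}{X}$ and locating the resulting task world inside the saturation. The approach and the point you flag as delicate (the gap between $\AliveSet{p}$ and $\AliveSet{t}$, resolved by working in the saturation) coincide with the paper's proof.
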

\begin{proof}
Let $\actMf{P}=\anglpair{P, \sim^P, \precond^P}$ and $\actMf{T}=\anglpair{T, \sim^T, \precond^T}$.
Using $\dectop$, we define $\deckrip:\func{\ImProd{\cplI}{\actMf{P}}}{\ImProd{\cplI}{\actMf{T}}}$ by
\[ 
    \deckrip \bigl((\preEqClass{Y}{X}, Y)\bigr)= \bigsat{\AliveSet{Y}}{(\preEqClass{Z}{X}, Z)}, 
    \]
where $Z\in T$, $\dectop(Y)\subseteq Z$, $\AliveSet{Z}\subseteq\AliveSet{X}$, and 
$\cplI, X\models \precond^T (Z)$. 

There exists $Z\in T$ that satisfies this condition. 
By the definition of $\preEqClass{Y}{X}$, we have 
$\cplI, X\models \precond^P (Y)$, which implies $Y\in \Psi(X)$.
By the condition~\eqref{def:topsolvability}, there exists $Z\in\Delta(X)$ 
such that $\dectop(Y)\subseteq Z$. This means $Z\in T$ and $\cplI, X\models \precond^T(Z)$. 
Furthermore, $\AliveSet{Z}\subseteq\AliveSet{X}$ because $\Delta$ is color-preserving.

The definition of $\deckrip$ is not affected by the choice of $Z$. 
To show this, suppose $Z_1, Z_2 \in T$ such that 
$\dectop(Y)\subseteq Z_i$, $\AliveSet{Z_i}\subseteq\AliveSet{X}$, and $\cplI, X\models \precond^T (Z_i)$ 
($i=1,2$). 
From $\dectop(Y)\subseteq Z_i$ ($i=1,2$), we have
$\coloring^T(Z_1\cap Z_2) \supseteq  \coloring^T(\dectop(Y)) =\coloring^P(Y)$. 
Hence $Z_1 \sim_{\AliveSet{Y}}^T Z_2$. 
This implies $\preEqClass{Z_1}{X} \sim_{\AliveSet{Y}}^{\ImProd{\cplI}{\actMf{T}}} \preEqClass{Z_2}{X}$
and therefore  $\bigsat{\AliveSet{Y}}{(\preEqClass{Z_1}{X}, Z_1)} =\bigsat{\AliveSet{Y}}{(\preEqClass{Z_2}{X}, Z_2)}$.

We show that $\deckrip$ is a morphism over partial epistemic models. 
Obviously it satisfies the saturation property by definition. 
To show that $\deckrip$ preserves the indistinguishability, 
suppose $(\preEqClass{Y_1}{X_1}, Y_1)\sim_a^{\ImProd{\cplI}{\actMf{P}}} (\preEqClass{Y_2}{X_2}, Y_2)$.
This implies $X_1\relK{\cplI}{a} X_2$ and $Y_1 \sim_a^P Y_2$, and in particular 
$a\in \coloring^{\cplP} (Y_1 \cap Y_2)$. 
For all $(\preEqClass{Z_1}{X_1}, Z_1)\in \deckrip\bigl((\preEqClass{Y_1}{X_1}, Y_1)\bigr)$ 
and $(\preEqClass{Z_2}{X_2}, Z_2)\in \deckrip\bigl((\preEqClass{Y_2}{X_2}, Y_2)\bigr)$, 
we have $Z_1\relK{T}{\AliveSet{Y_1\cap Y_2}}Z_2$ by the definition of $\deckrip$. 
In particular, $Z_1\relK{T}{a} Z_2$. 
Hence $(\preEqClass{Z_1}{X_1}, Z_1)\relK{\ImProd{\cplI}{\actMf{T}}}{a} (\preEqClass{Z_2}{X_2}, Z_2)$. 
For the preservation of atomic propositions, 
suppose 
$(\preEqClass{Y}{X}, Y)\in \ImProd{\cplI}{\actMf{P}}$, 
$(\preEqClass{Z}{X}, Z)\in \deckrip \bigl((\preEqClass{Y}{X}, Y)\bigr)$, and 
$a\in \AliveSet{Y}$. 
By definition, $X\relK{\cplI}{a}X'$ holds for every $X'\in \preEqClass{Y}{X}$. 
This implies that   
$L^{\ImProd{\cplI}{\actMf{P}}}\bigl((\preEqClass{Y}{X}, Y)\bigr) \cap \AtomProps_a =\labSM^{\cplI} (X)\cap\AtomProps_a$.
Similarly, 
$L^{\ImProd{\cplI}{\actMf{T}}}\bigl((\preEqClass{Z}{X}, Z)\bigr) \cap \AtomProps_a =\labSM^{\cplI} (X) \cap \AtomProps_a$. 
Therefore $L^{\ImProd{\cplI}{\actMf{P}}}\bigl((\preEqClass{Y}{X}, Y)\bigr) \cap \AtomProps_a = 
L^{\ImProd{\cplI}{\actMf{T}}}\bigl((\preEqClass{Z}{X}, Z)\bigr) \cap \AtomProps_a$.

Finally, let us show that $\deckrip$ satisfies the condition~\eqref{eq:kripsolve}. 
Suppose 
$\deckrip\bigl((\preEqClass{Y}{X}, Y)\bigr)=$ $
\satop_{\AliveSet{Y}}\bigl((\preEqClass{Z}{X}$, $Z)\bigr)$
where $Z\in T$, $\dectop(Y) \subseteq Z$, $\AliveSet{Z}\subseteq\AliveSet{X}=\Ag$.
Assume $X'\in \preEqClass{Y}{X}$.
By definition, 
$X \sim_{\AliveSet{Y}}^{\cplI} X'$ and $\cplI, X'\models \precond^P (Y)$. 
The latter implies $Y\in \Psi(X')$ and thus $\dectop(Y)\in \dectop(\Psi(X'))$. 
Since $\dectop(\Psi(X'))\subseteq \Delta(X')$ by~\eqref{eq:topsolve}, 
there exists $Z'\in T$ such that $\dectop(Y)\subseteq Z'$ and $Z'\in \Delta(X')$. 
The former implies $\dectop(Y)\subseteq Z\cap Z'$ and thus $Z \relK{T}{\AliveSet{Y}}Z'$;
The latter implies $\cplI,X'\models \precond^T(Z')$ and hence $\preEqClass{Z'}{X'}$.
By these, we obtain 
$(\preEqClass{Z}{X},Z) \sim_{\AliveSet{Y}}^{\ImProd{\cplI}{\actMf{T}}} 
(\preEqClass{Z'}{X'},Z')$. 
Therefore we have $(\preEqClass{Z'}{X'},Z') \in \deckrip \bigl((\preEqClass{Y}{X}, Y)\bigr)$,
where $X' \in \preEqClass{Z'}{X'}$ trivially holds.

\end{proof}

\begin{theorem}
    Suppose a simplicial task $\anglpair{\cplI, \cplP, \Psi}$ 
    and a simplicial protocol $\anglpair{\cplI, \cplO, \Delta}$ are given, where
    $\cplI=\anglpair{V^{\cplI}, S^{\cplI}, \coloring^{\cplI}, \labSM^{\cplI}}$ is an input simplicial model.
    Let $\actMf{P}=\actionkappa(\anglpair{\cplI, \cplP, \Psi})$ and 
    $\actMf{T}=\actionkappa(\anglpair{\cplI, \cplO, \Delta})$.
    Then 
    a task $\anglpair{\cplI, \cplO, \Delta}$  is solvable by
    a protocol $\anglpair{\cplI, \cplP, \Psi}$ 
    if and only if $\actMf{T}$ is solvable by $\actMf{P}$.
\end{theorem}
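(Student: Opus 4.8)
The plan is to observe that this theorem is nothing more than the conjunction of the two preceding lemmas, once the two notions of solvability are unfolded to their definitions. By Definition~\ref{def:topsolvability}, the task $\anglpair{\cplI, \cplO, \Delta}$ is solvable by the protocol $\anglpair{\cplI, \cplP, \Psi}$ exactly when there is a chromatic simplicial map $\dectop \colon \cplP \to \cplO$ satisfying the inclusion~\eqref{eq:topsolve}; by Definition~\ref{def:kripkesolvability}, the task $\actMf{T}$ is solvable by the protocol $\actMf{P}$ exactly when there is a morphism $\deckrip \colon \ImProd{\cplI}{\actMf{P}} \to \ImProd{\cplI}{\actMf{T}}$ satisfying~\eqref{eq:kripsolve}. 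Since the theorem fixes $\actMf{P} = \actionkappa(\anglpair{\cplI, \cplP, \Psi})$ and $\actMf{T} = \actionkappa(\anglpair{\cplI, \cplO, \Delta})$, the action models appearing in the two definitions are precisely the ones to which both lemmas apply.

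First I would prove the forward implication. Assuming the task is topologically solvable, I extract from Definition~\ref{def:topsolvability} a simplicial map $\dectop$ satisfying~\eqref{eq:topsolve}, and feed it to Lemma~\ref{lem:smpltokripdecision}, which yields a morphism $\deckrip$ satisfying~\eqref{eq:kripsolve}; by Definition~\ref{def:kripkesolvability} this is exactly the statement that $\actMf{T}$ is solvable by $\actMf{P}$. For the converse, assuming $\actMf{T}$ is solvable by $\actMf{P}$, I extract the morphism $\deckrip$ satisfying~\eqref{eq:kripsolve} and apply Lemma~\ref{lem:kriptosimpldecision}, obtaining a simplicial map $\dectop$ satisfying~\eqref{eq:topsolve}; by Definition~\ref{def:topsolvability} this is exactly topological solvability. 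The two implications together give the biconditional.

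Since the two lemmas already carry all of the constructive content---the passage between a simplicial decision map and a saturating morphism of partial epistemic models, and the verification that each preserves colorings, indistinguishability, atomic propositions, and the solvability inclusion---the only thing that remains is bookkeeping. The one point that deserves care, and which I expect to be the sole potential pitfall, is confirming that the hypotheses and conclusions of Lemmas~\ref{lem:kriptosimpldecision} and~\ref{lem:smpltokripdecision} match the two solvability definitions verbatim; in particular that the direction of the morphism, from the protocol update model $\ImProd{\cplI}{\actMf{P}}$ to the task update model $\ImProd{\cplI}{\actMf{T}}$, is consistent throughout, and that the $\actMf{P}$ and $\actMf{T}$ named in the theorem are indeed the $\actionkappa$-images named in both lemmas. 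No further estimation or construction is needed.
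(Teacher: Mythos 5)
Your proposal matches the paper's proof exactly: the paper likewise disposes of this theorem in one line by citing Lemma~\ref{lem:smpltokripdecision} for the forward direction and Lemma~\ref{lem:kriptosimpldecision} for the converse, with all the constructive content already in those lemmas. Your extra care about checking that the morphism direction and the $\actionkappa$-images line up with the definitions is sensible bookkeeping but introduces nothing beyond what the paper takes for granted.
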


\begin{proof}
    Follows from Lemma \ref{lem:kriptosimpldecision} and \ref{lem:smpltokripdecision}.
\end{proof}

\end{document}